\newcolumntype{P}[1]{>{\centering\arraybackslash}p{#1}}
\newtheorem{theorem}{Theorem}
\newtheorem{definition}[theorem]{Definition}
\theoremstyle{remark}
\begin{document}

\title{Twisting Signals for Joint Radar-Communications: \\An OAM Vortex Beam Approach}

\author{Wanghan Lv, \IEEEmembership{Member, IEEE}, Kumar Vijay Mishra, \IEEEmembership{Senior Member, IEEE} and Jinsong Hu, \IEEEmembership{Member, IEEE}
\thanks{This work was supported in part by the Nature Science Foundation of China (NSFC) under Grant 62441109 and Grant 62001116 and in part by Natural Science Foundation of Jiangsu Higher Education Institutions of China under Grant 23KJB510008. The conference precursor of this work was presented at the 2023 IEEE International Workshop on Signal Processing Advances in Wireless Communications (SPAWC) \cite{lv2023joint}. }
\thanks{W. L. is with College of Computer and Information Engineering, Nanjing Tech University, Nanjing 211816, China, e-mail: lwanghan@njtech.edu.cn. }
\thanks{K. V. M. is with The University of Maryland, College Park, MD 20742 USA, e-mail: mishra@umd.edu.}
\thanks{J. H. is with Fuzhou University, Fuzhou, China, e-mail: jinsong.hu@fzu.edu.cn.}
}

\maketitle

\IEEEpeerreviewmaketitle

\begin{abstract}
Orbital angular momentum (OAM) technology has attracted much research interest in recent years because of its characteristic helical phase front twisting around the propagation axis and natural orthogonality among different OAM states to encode more degrees of freedom than classical planar beams. Leveraging upon these features, OAM technique has been applied to wireless communication systems to enhance spectral efficiency and radar systems to distinguish spatial targets without beam scanning. Leveraging upon these unique properties, we propose an OAM-based millimeter-wave joint radar-communications (JRC) system comprising a bi-static automotive radar and vehicle-to-vehicle (V2V) communications. Different from existing uniform circular array (UCA) based OAM systems where each element is an isotropic antenna, an OAM spatial modulation scheme utilizing a uniform linear array (ULA) is adopted with each element being a traveling-wave antenna, producing multiple Laguerre-Gaussian (LG) vortex beams simultaneously. Specifically, we first build a novel bi-static automotive OAM-JRC model that embeds communication messages in a radar signal, following which a target position and velocity parameters estimation algorithm is designed with only radar frames. Then, an OAM-based mode-division multiplexing (MDM) strategy between radar and JRC frames is presented to ensure the JRC parameters identifiability and recovery. Furthermore, we analyze the performance of the JRC system through deriving recovery guarantees and Cram\'er-Rao lower bound (CRLB) of radar target parameters and evaluating the bit error rate (BER) of communication, respectively. Our numerical experiments validate the effectiveness of the proposed OAM-based JRC system and parameter estimation method.
\end{abstract}

\begin{IEEEkeywords}
Automotive radar, joint radar-communication, Laguerre-Gaussian vortex beam, orbital angular momentum, parameter estimation.
\end{IEEEkeywords}

\section{Introduction}
In recent years, autonomous vehicles have been envisioned to revolutionize transportation and are thus the focus of increasing interest both in academia and industry \cite{VKKukkala2018,SSaponara2019}. Such self-driving cars are expected to navigate efficiently and safely in a wide variety of complex uncontrolled environments, promoting the development in vehicular control \cite{CMarina2018}, environmental sensing \cite{LXu2023}, and efficient resource utilization \cite{LWu2022}. To increase automotive safety, self-driving cars are equipped with multiple sensors, including lidar, camera, infrared detectors, global navigation satellite system, and radar transceivers. Among all these sensors, radar provides the ability to accurately detect distant objects and offers the advantage of robust detection in adverse vision and weather conditions compared to other competing sensing technologies \cite{SMPatole2017}. In addition to environmental sensing, autonomous vehicles are also required to carry out various forms of communications, such as vehicle-to-vehicle (V2V), vehicle-to-infrastructure (V2I), vehicle-to-pedestrian (V2P), and vehicle-to-network (V2N) transmissions, which is the so-called vehicle-to-everything (V2X) strategy \cite{MNoor2022}. Considering that automotive radar and communication systems are limited in size, power, weight, and cost \cite{JLee2022}, the synergistic design of these functionalities is necessary and such joint radar-communications (JRC) systems are increasingly investigated for spectrum and hardware sharing techniques, which have advantages of low cost, compact size, less power consumption, efficient spectrum utilization, and improved performance \cite{KVMishra2019,mishra2024signal}. 

Since automotive JRC systems implement both radar and communications using a single device, the goal of such systems is to transmit a single waveform and employ efficient receiver processing to extract both radar and communications parameters \cite{SHDokhanchi2019}. In general, existing JRC systems which adopt a single waveform can be divided into three main categories, namely radar waveform-based schemes, communications waveform-based approaches, and joint dual-function waveform designs \cite{JAZhang2022}. The radar waveform-based scheme is a radar-centric design that realizes the communication function in a primary radar system. Traditionally, either pulsed or continuous-wave radar signals are adopted, where the communication information can be embedded into the phase of the radar waveform \cite{CSahin2017}, the sidelobe \cite{AHassanien2016_2}, the carrier frequency \cite{XWang2019} or the basis of the radar sub-pulses \cite{RXu2023}. Recently, to improve communication rates, radars with advanced waveforms such as orthogonal frequency-division multiplexing (OFDM) \cite{GHakobyan2020} and frequency hopping \cite{THuang2020} are also emerging, which can embed more information bits into radar signals. For communications waveform-based approaches, namely communication-centric design, the research focus is on how to realize the radar sensing function in a primary communication system. In the millimeter-wave (mmWave) band, the IEEE 802.11 signals have been widely used because of high throughput advantages arising from wide bandwidth. For example, the IEEE 802.11 and IEEE 802.11p OFDM communication waveforms are used for radar sensing in vehicular networks \cite{RCDaniels2018,PKumari2018}. Unlike the previous two categories, the third category considers the design and optimization of the signal waveform without bias to either communication or radar sensing, which aims to fulfill the desired applications only. Existing classical examples include high-frequency systems that can potentially achieve both high data-rate communications and high-accuracy sensing \cite{KVMishra2019,YLuo2020,AMElbir2021} and multi-channel JRC systems which can offer an overall large signal bandwidth for sensing without increasing instantaneous communication bandwidth \cite{FDong2021}. Note that the above JRC techniques are all based on planar electromagnetic (EM) wave transmission, which poses several technical challenges for practical application. For example, in automotive scenarios, common automotive targets, e.g. pedestrians, bicycles, and cars, show distinctive micromotion features, such as the rotation of wheels and the swinging motions of arms and legs \cite{GDuggal2020}. Thus, a specific problem is how to detect the microfeatures of targets. Moreover, for existing wireless communication systems, another thorny problem is how to further improve the capacity and spectrum efficiency. 

As mobile networks evolve toward higher-generation radio frequency (RF) systems, orbital angular momentum (OAM) technology, due to its spiral wavefront, theoretically infinite states and natural orthogonality among different states, has received much attention in recent years \cite{WCheng2019}. Compared with the conventional degrees of freedom (DoFs) in wireless communications, such as DoFs in time, frequency and space domain, the OAM technology provides a new degree of freedom, namely OAM state, which can be utilized for mode-division multiplexing (MDM) of signals \cite{willner2021perspective}. This has the potential to yield a dramatic increase in channel capacity without requiring additional spectrum. 
In \cite{XGe2017}, a new OAM spatial modulation (OAM-SM) mmWave communication system was proposed and energy efficiency can be improved by approximately three times compared to the existing millimeter wave multiple-input-multiple-output (MIMO) communication system. Furthermore, the influences of atmospheric turbulence and misalignment between transmitting and receiving antennas on the OAM-SM transmission system were investigated \cite{XXiong2022}. To enhance the spectral efficiency (SE), OAM multiplexing can also be used simultaneously with a multiplexing technique that uses the orthogonality of spin angular momentum, known as polarization, and frequency- and time-domain multiplexing techniques. For example, in view of the advantage of OFDM technology in spectral efficiency, OAM technology was combined with OFDM technology, leading to an OAM-OFDM wireless communication system \cite{XXiong2020,LLiang2020}. In \cite{HSasaki2024}, OAM multiplexing was extended to OAM-MIMO multiplexing technology, which effectively combines the advantage of OAM multiplexing with that of MIMO-based digital signal processing with multiple uniform circular arrays for line-of-sight wireless transmission. 

For radar sensing, the vortex EM wave carrying OAM has been shown to achieve angular diversity without relative motion or beam scanning \cite{RChen2018,bu2022vortex}. Therefore, in target detection realms, OAM-based scheme can be applied to achieve the azimuthal profile of targets with high resolution \cite{KLiu2018}. Moreover, the spinning object can be detected using the rotational Doppler phenomenon when the line of sight is perpendicular to the target \cite{YWang2021}. Although the vortex EM wave brings the advantage of azimuthal super-resolution, its elevation resolution is restricted. To solve the problem, a three-dimensional (3-D) forward-looking imaging method with EM vortex was proposed, which breaks the limit of elevation resolution in conventional EM vortex \cite{JWang2022}. In \cite{DLiu2023}, an enhanced forward-looking bistatic imaging scheme based on OAM mode design was studied, where higher azimuth resolution performance can be achieved by extending Doppler bandwidth. In order to unify the radar sensing and wireless communications with a dual function OAM waveform, a communication and radar system that exploits OAM has been designed to increase data rates and radar sensitivity to certain chiral targets \cite{DOrfeo2021}. Very recently, \cite{WXLong2023} proposed a novel uniform circular array (UCA) based radar-centric JRC scheme including the OAM-based 3-D position estimation, rotation velocity detection, and specific target communication. 

Previous OAM-JRC studies mainly focused on monostatic sensing applications. However, vehicular JRC is usually bi- or multi-static \cite{SHDokhanchi2019}, wherein the application of OAM requires further investigation. In particular, the bi-static radar exploits the signals reflected from other vehicles and benefits from extending the sensing area to the non-line-of-sight positions relative to the receiver. In this paper, contrary to existing OAM-JRC systems, we mainly focus on the investigation of a bi-static mmWave OAM-JRC system in automotive scenario, where an OAM-based MDM strategy is designed to ensure the parameters identifiability. Preliminary results of this work appeared in our conference publication \cite{lv2023joint}, which introduced the preliminary conceptual ideas of bi-static OAM-JRC. The current journal paper presents a comprehensive unified description of the bi-static OAM-based JRC model and JRC parameters estimation algorithm, as well as provides complete theoretical analysis with detailed proofs, and extensive simulation study. We summarize the main contributions of this paper as: 

\textbf{1) Twisting JRC waveforms for bi-static automotive scenarios.} For the bi-static automotive scenario, classic JRC signals such as phase-modulated-continuous-wave (PMCW) and orthogonal-frequency-division-multiple-access (OFDMA) waveforms are under the framework of planar EM, where time, frequency, and space domains can be exploited for radar and communications encoding. In this paper, the vortex EM wave carrying OAM states is utilized, whose helical phase wavefront results in a twisting waveform. This twisting structure provides a new degree of freedom for JRC application, which can enhance the spectral efficiency of communications and azimuthal resolution of radar sensing. Inspired by the OAM-SM mmWave communication system \cite{XGe2017}, we present a uniform linear array (ULA) based OAM-MIMO JRC scheme for bi-static automotive scenarios, where each element is equipped with a travelling-wave antenna producing multiple OAM waves independently. Compared to classic UCA-based OAM systems, our proposed scheme shows the advantages in system simplification and energy efficiency. 

\textbf{2) An OAM-based MDM scheme to recover JRC parameters.} Different from traditional multiplexing strategies which focus on time, frequency, or space, an OAM-based MDM scheme is explored in this paper, which takes advantage of the orthogonality of vortex waveforms with different OAM modes. Specifically, we adopt the OAM-based MDM between radar and JRC frames. To recover the target position parameters and communication symbols with computational efficiency, a joint estimation of signal parameters via rotational invariance technique (ESPRIT) algorithm is devised. Our proposed algorithm can overcome the well-known paring issue among target position parameters and communication symbols effectively. In addition, the OAM-based target velocity estimation, including linear velocity and rotation velocity, is also discussed in our proposed OAM-JRC bi-static system.

\textbf{3) JRC parameters identifiability and statistical performance bounds.} As known to us all, parameters identifiability is the ability of recovering unknown parameters with limited number of measurements. In this paper, considering that the radar and JRC frames are multiplexed in the OAM mode domain through a sharing factor $\mu$, we show that the JRC parameters identifiability is determined by the sharing factor $\mu$, the number of transmit (receive) array elements, and the number of OAM states. Moreover, the detailed theoretical recovery guarantee is derived to uniquely determine the unknown radar and communication parameters. To investigate the JRC performance, we also analyze the Cram\'er-Rao lower bound (CRLB) of radar target parameters and bit error rate (BER) of communication symbols in the radar and communication subsystems, respectively. 

The remainder of this paper is organized as follows. In the next section, we formulate the signal model of the proposed bi-static OAM-JRC system. In Section \ref{sec:algorithm}, we present an OAM-based MDM between the radar and JRC frames and develop the JRC parameters recovery algorithms. After that, we make a detailed theoretical analysis about the OAM-JRC system by deriving recovery guarantees and CRLB of radar target parameters estimaion and evaluating the BER of communications in Section \ref{sec:PA}. In Section \ref{sec:SE}, we demonstrate
the performance of our proposed method through numerical examples. Finally, Section \ref{sec:summary} concludes this article.

    \begin{table*}
    \renewcommand{\arraystretch}{1.5}
    \caption{Comparison of different OAM beam modes}
    \label{tbl:comp1}       
    \centering
    \begin{threeparttable}
    \begin{tabular}{p{1.5cm}p{3.0cm}P{5cm}p{3.0cm}p{1.2cm}P{2cm}}
    \hline\noalign{\smallskip}
    q.v. & OAM Beam Mode & Field Distribution & Generation & Cost & Application 
    \\
    \noalign{\smallskip}
    \hline
    \noalign{\smallskip}
    \cite{YWang2017} & Perfect vortex beam  & $\delta(r-r_0) \mathrm{exp}(-jl\varphi)$ & Ideal & N/A & N/A \\
    \cite{HWu2013}  & Gaussian vortex beam                &   $\mathrm{exp}\left(-\frac{r^2}{w^2} \right) \mathrm{exp}(-jl\varphi) $     & Spiral phase plate (SPP), computer-generated holograms (CGH), metasurface  &     Low     & Free space optical (FSO) communication  \\
    \cite{FGori1987} & BG vortex beam & $J(\alpha r) \mathrm{exp}\left(-\frac{r^2}{w^2} \right) \mathrm{exp}(-jl\varphi)  $ & UCA &  High   & FSO/wireless communication, radar \\
    \cite{HZhang2022}  & Elliptic vortex beam  & $ \mathrm{exp}\left[-\frac{x^2 + (\epsilon y)^2}{w^2}  \right] \left[ \frac{\sqrt{x^2 + (\epsilon y)^2}}{w} \right]^{\vert l \vert} \cdot  \mathrm{exp} \left[-jl \arctan \left(\frac{\epsilon y}{x} \right) \right]   $   & Spatial light modulator (SLM) & Low &  FSO communication  \\
    \cite{BNdagano2018}   & Vector vortex beam  & $\cos (\theta) \mathrm{exp}(jl\varphi) \hat{R} + \sin (\theta) \mathrm{exp}(-jl\varphi + j\gamma) \hat{L}$  & SLM, metasurface  &  Low  & FSO communication \\
    This paper & LG vortex beam & Eq. (\ref{eq:Prime_1}) & SPP, travelling-wave antenna &     Low     & FSO/wireless communication, radar  \\
    
    \noalign{\smallskip}\hline\noalign{\smallskip}
    \end{tabular}
    \end{threeparttable}
    \end{table*}
    
Throughout this paper, we represent the transpose, Hermitian and conjugate by $(\cdot)^T$, $(\cdot)^H$ and $(.)^*$, respectively. $\otimes$, $\odot$ and $\diamond$ denote the Kronecker, Khatri-Rao and Hadamard products, respectively. $\mathrm{E}[.]$ is the statistical expectation function. $P(.)$ denotes the probability of an event. $\mathrm{Re}\{.\}$ returns the real part. $\mathbf{I}_L$ is the $L\times L$ identity matrix. $\mathbf{1}_{N\times L}$ is the $N\times L$ matrix of all ones. The $(i,j)$-th entry of matrix $\mathbf{A}$ is $[\mathbf{A}]_{i,j}$ and the $j$-th column of $\mathbf{A}$ is $[\mathbf{A}]_j$. $\angle (.)$ denotes the phase of it argument, and $\mathrm{vec}(.)$ is the vectorization operator that turns a matrix into a vector by stacking all the columns on top of another. $\mathrm{diag}(\mathbf{a})$ returns a square diagonal matrix with the elements of vector $\mathbf{a}$ on the main diagonal. $\mathrm{flipud}(.)$ returns a matrix with its rows flipped in the up-down direction.

\section{Signal Model}   \label{sec:SM}
Recall that an EM wave carries two types of momenta, namely linear momentum and angular momentum \cite{RChen2023,CAn2023}. The angular momentum can be further decomposed into spin angular momentum (SAM) and OAM. The SAM, namely polarization, has been studied thoroughly, which describes merely the direction of oscillation in the electric field. Different with SAM, the OAM component stems from the variation of wave phase with respect to the azimuthal angle around the propagation axis of the wave. 

\subsection{Preliminaries of Laguerre-Gaussian Beams}    

\begin{figure}[t]
\centerline{\includegraphics[scale=0.62]{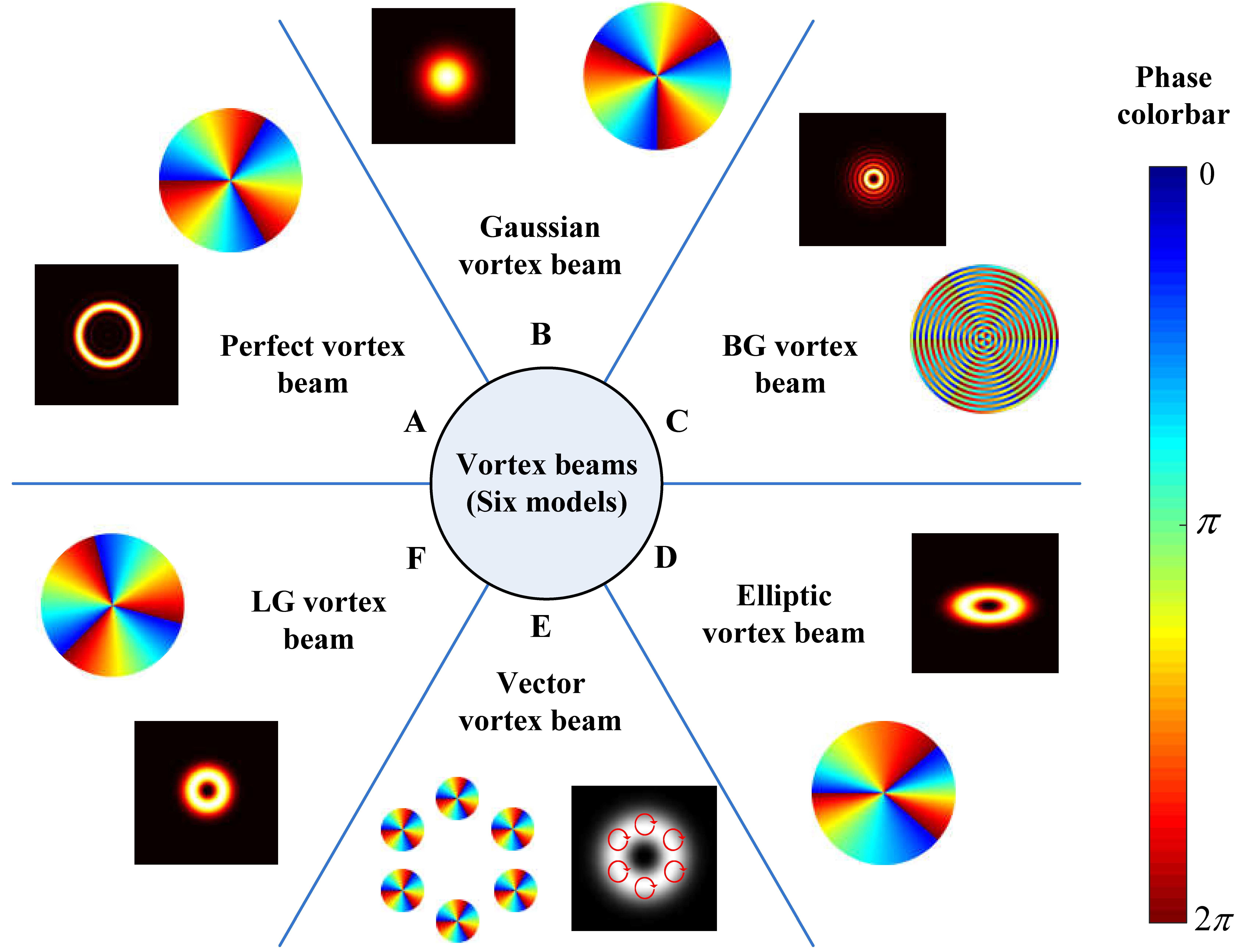}}
\caption{Intensity and phase patterns of the theoretical models of six categories of vortex beams with topological charge (TC) set as 3.  }
\label{Intensity_phase}
\end{figure}

Based on the intrinsic characteristics, OAM beams can be classified into perfect vortex beams, Gaussian vortex beams, Bessel–Gaussian (BG) vortex beams, elliptic vortex beams, vector vortex beams, and Laguerre–Gaussian (LG) vortex beams. The intensity and phase patterns of the above six categories of vortex beams are illustrated in Fig. \ref{Intensity_phase} and the differences among these beam modes are shown in Table \ref{tbl:comp1}, wherein BG vortex beams and LG vortex beams are widely used in electromagnetic field, such as wireless communication and radar application. In this paper, we adopt the LG beam scheme rather than BG beam scheme to describe OAM beams since the former one can provide lower generation cost and a simpler way for subsequent derivation and analysis. In a cylindrical coordinate system, the LG beam is exhibited as \cite{XGe2017}
\begin{subequations}
\begin{align}           \label{eq:Prime_1}
\!\!\!  \!\!\!  \mathcal{A}(r,\varphi,z) \!\! &=\! \gamma \sqrt{\frac{\mathfrak{P}!}{\pi(\mathfrak{P}+\vert l \vert)!}} \frac{1}{w_l(z)} \left( \frac{\sqrt{2}r}{w_l(z)} \right)^{\vert l \vert} e^{-\left(\frac{r}{w_l(z)}\right)^2}    \notag \\
    & \times \! \mathbb{L}_{\mathfrak{P}}^{\vert l \vert} \left( \frac{2r^2}{w_l^2(z)} \right) \! e^{-\mathrm{j}\frac{\pi r^2}{\lambda R_l(z)}} e^{\mathrm{j}(\vert l \vert + 2\mathfrak{P}+1 )\zeta(z)} e^{-\mathrm{j}l\varphi},
\end{align}
with
\begin{align}
    w_l(z) = w_l \sqrt{1+\left( \frac{z}{z_R} \right)^2},~~ R_l(z)=z\left[ 1+\left( \frac{\pi w_l^2}{\lambda z} \right)^2 \right],
\end{align}
\end{subequations}
where  $(r,\varphi, z)$ is the cylindrical coordinate with radial distance $r$, azimuthal angle $\varphi$ and propagation distance $z$, respectively; $\gamma \sqrt{\frac{\mathfrak{P}!}{\pi(\mathfrak{P}+\vert l \vert)!}}$ is a normalized constant with $\mathfrak{P}$ being the radial index and $l$ being the value of OAM state; $w_l$ is the beam waist radius corresponding to the OAM state $l$ with $z = 0$; $z_R= \pi w_l^2/\lambda$ is the Rayleigh distance where $\lambda$ denotes the wavelength; $ \mathbb{L}_{\mathfrak{P}}^{\vert l \vert} \left( \frac{2r^2}{w_l^2(z)} \right)$ is the generalized Laguerre polynomial; and $\zeta(z)$ is the Gouy phase which stems from the phase velocity of signal. Since the transverse intensity distribution of an LG vortex beam exhibits a singularity at the center, the energy of OAM signal is focused in a ring, namely OAM circle region. The radius of the OAM circle region with the maximum energy strength can be configured by 
\begin{align}
    r_{\mathrm{max}}(z) =w_l \sqrt{\frac{\vert l \vert (1+ ( z/z_R )^2 )}{2}}.
\end{align}
In our proposed OAM-based JRC system, we set $\mathfrak{P}=0$ leading to $\mathbb{L}_{\mathfrak{P}}^{\vert l \vert} \left( \frac{2r^2}{w_l^2(z)} \right)=1$.

Fig. \ref{OAM_beam} displays the wavefronts and azimuthally varying phases for different OAM numbers. For conventional planar waveforms, the OAM state number $l$ is equal to 0, where the wavefronts experience no contortion. For other OAM state numbers which are not equal to 0, the wavefronts exhibit helical phase profile. Besides that, another difference between OAM wave and planar wave lies in beam intensity distribution. The zeroth-order beam, namely planar wave beam, lacks an azimuthal phase variation so that it does not exhibit a singularity and the intensity has a Gaussian shape. For a higher-order beam, namely OAM beam, a point of undefined phase arises due to the helical phase evolution around the propagation axis, which manifests itself as a central singularity. The intensity distribution of OAM beam is a function of radial distance and OAM state number for a fixed $z$-plane. 

\begin{figure}[t]
\centerline{\includegraphics[scale=0.46]{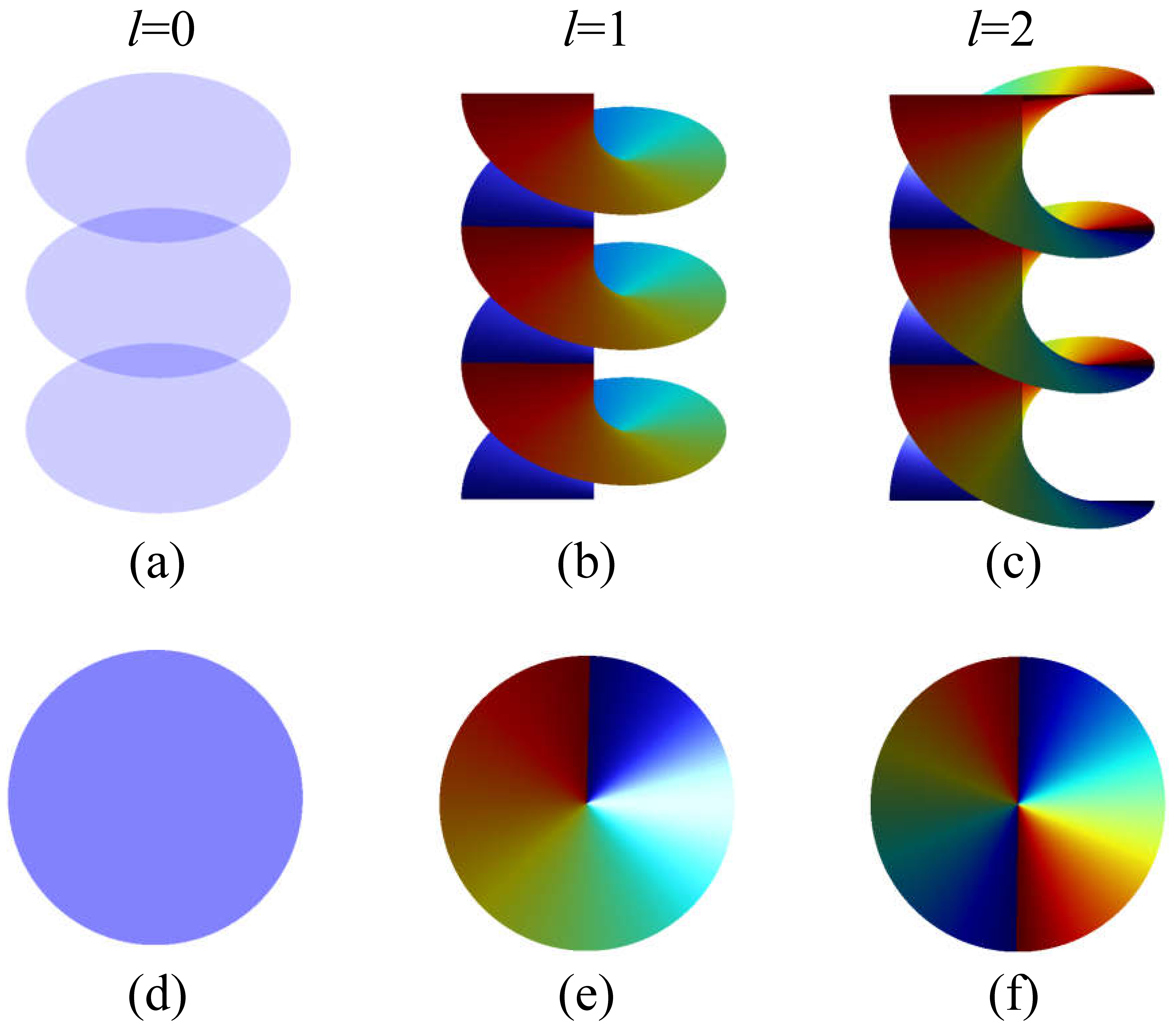}}
\caption{Depiction of the (a)-(c) wavefronts for OAM numbers 0, 1 and 2, respectively and (d)-(f) are the corresponding azimuthally varying phase plates of vortex beams.  }
\label{OAM_beam}
\end{figure}

\subsection{OAM-JRC}  
\begin{figure}[t]
\centerline{\includegraphics[scale=0.32]{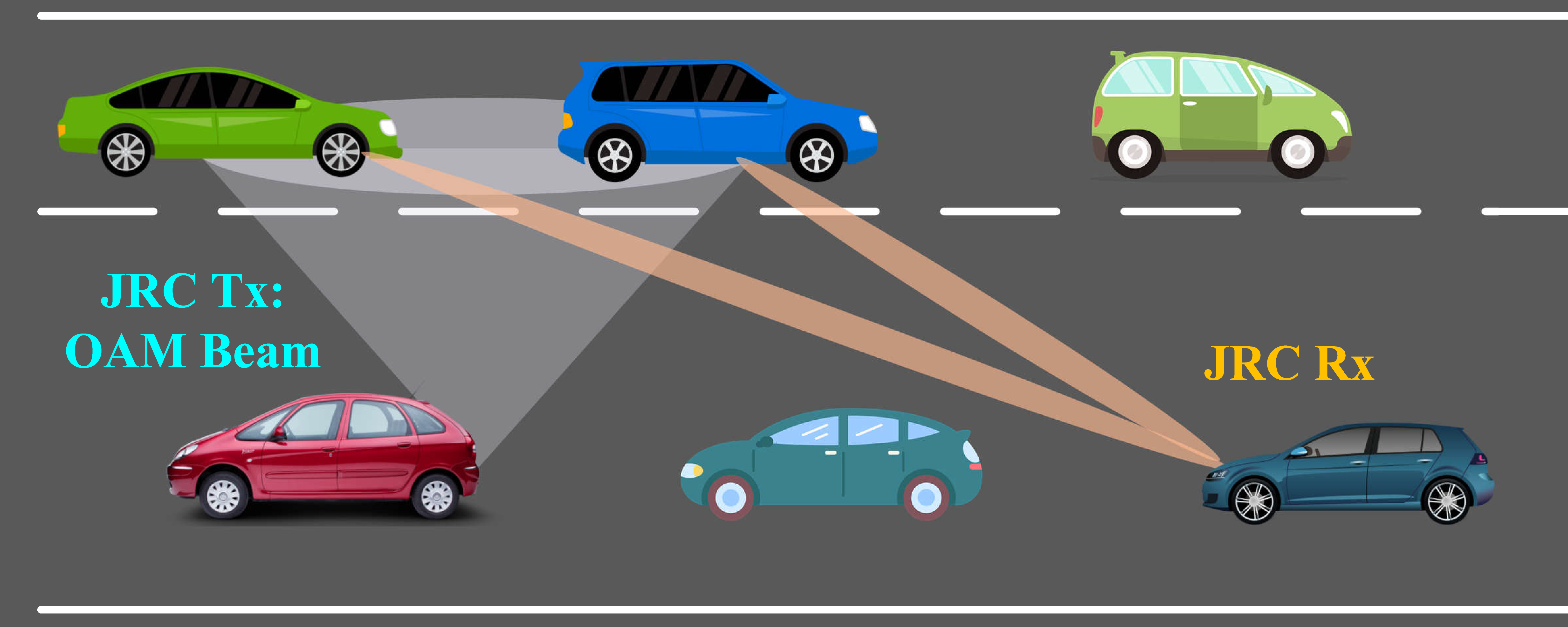}}
\caption{A bi-static OAM-JRC scenario: a common transmitter (Tx) on a vehicle (bottom lane, left) sends out a JRC signal that reflects off other vehicles (top lane) and is intercepted by a receiver (Rx) on another vehicle (bottom lane, right). }
\label{Scenario}
\end{figure}
Consider the automotive driving scenario on a normal road depicted in Fig. \ref{Scenario}. The JRC transmit (receive) antenna is a frequency-diverse-array structure that sends out (receives) vortex EM wave carrying OAM and comprises $M$ ($N$) elements in a ULA whose interelement spacing is denoted by $d$. The transmit array elements are equipped with travelling-wave ring resonators which can be used to generate LG vortex beams with different OAM states simultaneously \cite{XXiong2022_1}. For ease of analysis, different OAM beams are assumed to have approximate sizes of circle regions, denoted by $r_{\mathrm{max}}(z)$, for the same $z$. The frequency diverse array (FDA) employs a unit frequency increment $\Delta f$ between adjacent antennas so that the radiation frequency of the $m$-th transmit element takes the form
\begin{align}
    f_m=f_0-m \Delta f,~~~ m=0,1,\ldots,M-1
\end{align}
where $f_0$ is the reference carrier frequency.

\begin{figure*}[t]
\centerline{\includegraphics[scale=0.68]{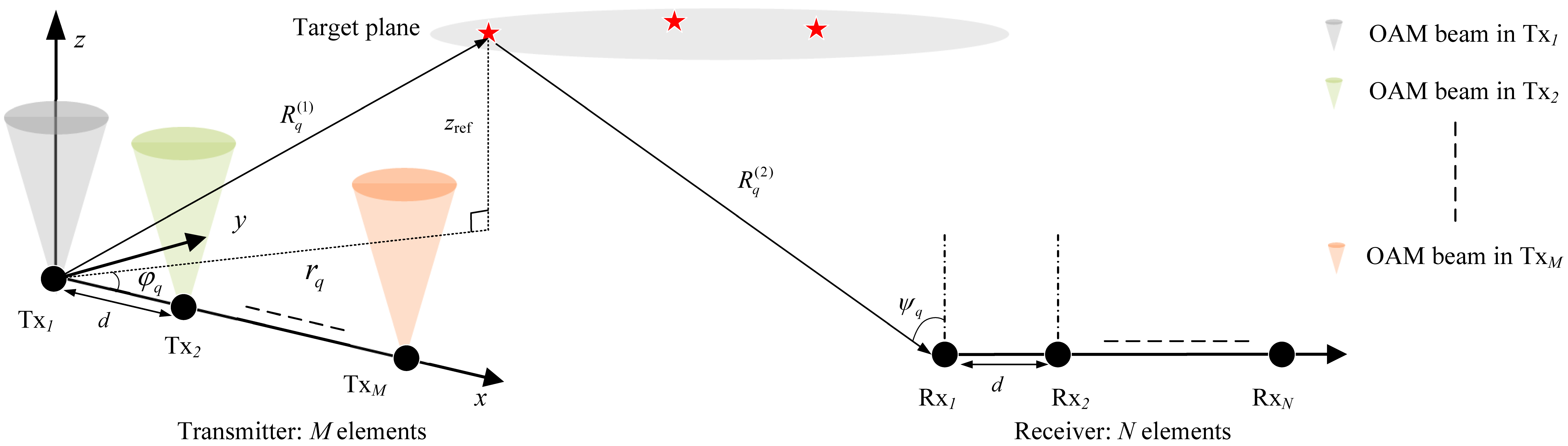}}
\caption{The simplified schematic diagram of Fig. \ref{Scenario}. }
\label{Schematic}
\end{figure*}

For a clear display, we predigest Fig. \ref{Scenario} to the schematic diagram in Fig. \ref{Schematic}. The first transmit element is regarded as the origin of the cylindrical coordinate system of the transmitting ULA. The position of the $m$-th transmit element is denoted as $\bm{\xi}_m=(md,0,0)$. The $x$-$y$ plane is perpendicular to the ground where $x$-axis is parallel to the white lane dividing line and $y$-axis is perpendicular to the ground. $z$-axis is perpendicular to the $x$-$y$ plane. The scattering surfaces of different automobiles (targets) can be assumed to be in the same plane, namely target plane, which is parallel to the $x$-$y$ plane. Each transmit element generates OAM beams with $2U+1$ different states whose state number is characterized by the set $\mathcal{M}=\{l_{-U},\ldots,l_{-1},l_0,l_1,\ldots,l_U \}$ where $l_{-u} \triangleq -l_u$, $-U\leq u \leq U$, and $l_0=0$. Here, we take a uniform pattern of $\mathcal{M}$ into account, whose inter-state number spacing is $\delta$, namely $l_u=u\delta$. The continuous-time signal transmitted from the $m$-th element at the $u$-th OAM state is
\begin{align}                     \label{eq:SM_2}
    s_m^{l_u}(t) = A_m a_u e^{\mathrm{j}2\pi f_m t} ,
\end{align}
where $a_u$ denotes the complex data symbol with duration $T_{\mathrm{sym}}$ at the $u$-th OAM state and $A_m$ is the amplitude. Without loss of generality, we set $A_m=1$.

In automotive scenario, we take extended targets into account. Basically, an extended target can be modeled as a cluster of independent point scatterers. Therefore, assume that there are totally $Q$ point scatterers from all extended targets and $\tau_q^{(1)}=R_q^{(1)}/c$ is the time delay between the $q$-th, $1\leq q \leq Q$, point scatterer and transmitting array where $R_q^{(1)}$ is the corresponding distance. Denote $z_{\mathrm{ref}}$ as the distance between the origin and the target plane, $d_{\mathrm{ref}}$ as the distance between the origin and the OAM circle region which has a radius $r_{\mathrm{max}}(z_{\mathrm{ref}})$ in the target plane. In a narrowband flat fading channel, the instantaneous phase at the $q$-th scatterer is the sum of the distance-dependent phase and the azimuthal phase. Thus, by adopting the OAM-based wireless channel response model which is derived from the transmitted LG beams $\mathcal{A}(r,\varphi,z)$ in (\ref{eq:Prime_1}) at each transmit element \cite{XXiong2022}, the signal at the $u$-th OAM state from the $m$-th transmit element impinging on the $q$-th scatterer can be written as
\begin{align}               \label{eq:SM_2_1}
   z_{m,q}^{l_u} (t) &= \frac{\lambda_m}{4\pi d_{\mathrm{ref}}} \left( \frac{r_{mq}}{r_{\mathrm{max}}(z_{\mathrm{ref}})} \right)^{\vert l_u \vert} e^{-\frac{r_{mq}^2- r^2_{\mathrm{max}}(z_{\mathrm{ref}})}{w^2_{l_u}(z_{\mathrm{ref}})}}                           \notag  \\
   &~~~ \times e^{-\mathrm{j}\pi \frac{r_{mq}^2- r^2_{\mathrm{max}}(z_{\mathrm{ref}})}{\lambda_m R_{l_u}(z_{\mathrm{ref}})}} e^{-\mathrm{j}2\pi \frac{d_{\mathrm{ref}}}{\lambda_m} } e^{-\mathrm{j}l_u \varphi_q }  a_u  e^{\mathrm{j}2\pi f_m (t-\tau_q^{(1)})}               \notag \\
   &\approx \frac{\lambda_m}{4\pi d_{\mathrm{ref}}} \left( \frac{r_{q}}{r_{\mathrm{max}}(z_{\mathrm{ref}})} \right)^{\vert l_u \vert} e^{-\frac{r_{q}^2- r^2_{\mathrm{max}}(z_{\mathrm{ref}})}{w^2_{l_u}(z_{\mathrm{ref}})}}                           \notag  \\
   &~~~ \times e^{-\mathrm{j}\pi \frac{r_{q}^2- r^2_{\mathrm{max}}(z_{\mathrm{ref}})}{\lambda_m R_{l_u}(z_{\mathrm{ref}})}} e^{-\mathrm{j}2\pi \frac{d_{\mathrm{ref}}}{\lambda_m} } e^{-\mathrm{j}l_u \varphi_q }  a_u  e^{\mathrm{j}2\pi f_m (t-\tau_q^{(1)})} 
\end{align}
where $\lambda_m = c/f_m$, $c=3\times 10^8 \mathrm{m/s}$ is the velocity of electromagnetic wave propagation, $\varphi_q$ is the azimuthal angle between the $q$-th scatterer and transmitting array and $r_{mq}$ is the radial distance in the cylindrical coordinate system between the $m$-th transmit element and the $q$-th scatterer. The approximation in (\ref{eq:SM_2_1}) stems from the far-field assumption so that $r_{mq} \approx r_q$. Obviously, the radial distance $r_q$ is related with the transmitter-scatterer distance $R_q^{(1)}$ by 
\begin{align}
    R_q^{(1)} = \sqrt{r_q^2+ z_{\mathrm{ref}}^2}.
\end{align}

The total flight time corresponding to a bi-static range $R_q = R_q^{(1)}+R_q^{(2)}$, where the superscript (2) denotes variable dependency on the target-receiver path, is $\tau_q=\tau_q^{(1)}+\tau_q^{(2)}$. Assume $\psi_q$ to be the angle between the $q$-th scatterer echo and the normal line of receiving ULA. Then, the $m$-th frequency component with OAM state $l_u$ reflected by targets and received by the $n$-th, $0\leq n \leq N-1$, receive element takes the form 
\begin{align}          \label{eq:SM_3}
    x_{m,n}^{l_u}(t) &= \sum_{q=1}^Q  \rho_q(t) z_{m,q}^{l_u} (t-\tau_q^{(2)}) e^{-\mathrm{j}2\pi d\sin(\psi_q)n/\lambda_m} + n_{m,n}^{l_u}(t)      \notag \\
    &=  \sum_{q=1}^Q  \rho_q(t) a_u \frac{\lambda_m}{4\pi d_{\mathrm{ref}}} \left( \frac{r_{q}}{r_{\mathrm{max}}(z_{\mathrm{ref}})} \right)^{\vert l_u \vert} e^{-\frac{r_{q}^2- r^2_{\mathrm{max}}(z_{\mathrm{ref}})}{w^2_{l_u}(z_{\mathrm{ref}})}}  \notag  \\
   &~~~ \times e^{-\mathrm{j}\pi \frac{r_{q}^2- r^2_{\mathrm{max}}(z_{\mathrm{ref}})}{\lambda_m R_{l_u}(z_{\mathrm{ref}})}} e^{-\mathrm{j}2\pi \frac{d_{\mathrm{ref}}}{\lambda_m} } e^{-\mathrm{j}l_u \varphi_q }    e^{\mathrm{j}2\pi f_m (t-\tau_q)}                       \notag \\
   &~~~ \times e^{-\mathrm{j}2\pi d \sin(\psi_q)n/\lambda_m} + n_{m,n}^{l_u}(t)           \notag \\
   &= \sum_{q=1}^Q  \acute{\rho}_q(t) a_u  e^{-\mathrm{j}\pi \frac{r_{q}^2- r^2_{\mathrm{max}}(z_{\mathrm{ref}})}{\lambda_m R_{l_u}(z_{\mathrm{ref}})}} e^{-\mathrm{j}2\pi \frac{d_{\mathrm{ref}}}{\lambda_m} } e^{-\mathrm{j}l_u \varphi_q }        \notag \\
   &~~~ \times e^{\mathrm{j}2\pi f_m (t-\tau_q)} e^{-\mathrm{j}2\pi d\sin(\psi_q)n/\lambda_m} + n_{m,n}^{l_u}(t) ,
\end{align}
where $n_{m,n}^{l_u}(t)$ is the additive spatially and temporally white noise with the power $\sigma^2_n$, $\{\rho_q(t)\}_{q=1}^Q$ are the complex scattering coefficients of scatterers, satisfying $\mathrm{E}[\rho_p^* \rho_q] = \sigma_q^2 \delta_{p,q}$. In our model of (\ref{eq:SM_3}), amplitude terms have no effect on parameter estimation where we can ignore RCS dependency on the transmit elements and OAM states, thus we have
\begin{align}
    \acute{\rho}_q(t) = \rho_q(t) \frac{\lambda_m}{4\pi d_{\mathrm{ref}}} \left( \frac{r_{q}}{r_{\mathrm{max}}(z_{\mathrm{ref}})} \right)^{\vert l_u \vert} e^{-\frac{r_{q}^2- r^2_{\mathrm{max}}(z_{\mathrm{ref}})}{w^2_{l_u}(z_{\mathrm{ref}})}} .
\end{align}
For the phase term $e^{-\mathrm{j}2\pi \frac{d_{\mathrm{ref}}}{\lambda_m} }$, it only depends on the transmit antenna which can be compensated directly. Thus, by compensation, demodulation and applying band-pass filtering, the baseband signal corresponding to (\ref{eq:SM_3}) can be written as
\begin{align}          \label{eq:SM_4}
   x_{m,n}^{l_u}(t_s) 
   &=\sum_{q=1}^Q  \acute{\rho}_q(t_s) a_u  e^{-\mathrm{j}\pi \frac{r_{q}^2- r^2_{\mathrm{max}}(z_{\mathrm{ref}})}{\lambda_m R_{l_u}(z_{\mathrm{ref}})}} e^{-\mathrm{j}l_u \varphi_q }       \notag \\
   &~~~ \times e^{-\mathrm{j}2\pi (f_0 - m \Delta f) \tau_q} e^{-\mathrm{j}2 \pi d \sin(\psi_q)n/{\lambda_m}} + n_{m,n}^{l_u}(t_s)      \notag \\
   &\approx \sum_{q=1}^Q a_u \widetilde{\rho}_q(t_s)  e^{\mathrm{j}2\pi m \Delta f R_q/c} e^{-\mathrm{j}\left[  \pi \frac{r_{q}^2- r^2_{\mathrm{max}}(z_{\mathrm{ref}})}{\lambda_0 R_{l_u}(z_{\mathrm{ref}})} + l_u \varphi_q \right]} \notag \\
        & ~~~ \times e^{-\mathrm{j}2 \pi d \sin(\psi_q)n f_0/c} + n_{m,n}^{l_u}(t_s),  ~ t_s \in [0,T_{\mathrm{sym}}),
\end{align}
where $\widetilde{\rho}_q(t_s) = \acute{\rho}_q(t_s) e^{-\mathrm{j}2\pi f_0 \tau_q}$. The approximation originates from the narrowband assumption that $M\Delta f \ll f_0$. For convenience of representation, denote 
\begin{align}        \label{eq:SM_5}
    \Phi_q^{l_u} \triangleq  -\pi \frac{r_{q}^2- r^2_{\mathrm{max}}(z_{\mathrm{ref}})}{\lambda_0 R_{l_u}(z_{\mathrm{ref}})} - l_u \varphi_q.
\end{align}
Then, given an OAM state, say $l_u$, collecting different frequency diverse signals across different receive elements leads to the following slice of data vector
\begin{align}         \label{eq:SM_6}
    \mathbf{x}_{l_u}(t_s) &=   \sum_{q=1}^Q  \widetilde{\rho}_q(t_s) a_u e^{\mathrm{j} \Phi_q^{l_u}}   \mathbf{b}(R_q) \otimes  \mathbf{a}_R(\psi_q)  +   \mathbf{n}_{l_u}(t_s),
\end{align}
where $\mathbf{b}(R_q)$ is an $M\times 1$ vector with the $m$-th element as $e^{\mathrm{j}2\pi m\Delta f R_q/c}$ and $\mathbf{a}_R(\psi_q)$ is an $N\times 1$ vector with the $n$-th element as $e^{-\mathrm{j}2 \pi d \sin(\psi_q)n f_0/c}$.

In the proposed OAM-based JRC system, our goal is to recover the unknown scatterer parameters: direction of departure (DoD), namely the azimuthal angle $\varphi_q$, DoA $\psi_q$, bi-static range $R_q$, radial distance $r_q$, moving velocity and communication symbols $a_u$. We make the following assumptions about the target and radar parameters:
\begin{description}
  
    \item[A1] ``Unambiguous DoA and bi-static range'': To ensure the array structure deprived of ambiguity, assume
\begin{align}                  \label{eq:JRC_A3}
   \sin \psi_q \neq \sin \psi_p,~~~R_q \neq R_p ~~~\text{for}~~~ 1\leq p & \neq  q \leq Q.
\end{align}
    In addition, the bi-static range parameters $\{R_q \}_{q=1}^Q$ are assumed to be located in the interval $[z_{\mathrm{ref}},R_{\max}]$ where $R_{\mathrm{max}}=\frac{c}{\Delta f}$ is the maximum unambiguous range.
    
    \item[A2]  ``No DoD or radial distance ambiguities'': The DoD parameters $\{\varphi_q\}_{q=1}^Q$ lie in the interval $[\varphi_{\min},\varphi_{\max}]$ where $\varphi_{\min}=-\frac{\pi}{2\delta}$ and $\varphi_{\max}=\frac{\pi}{2\delta}$. The radial distance parameters $\{r_q \}_{q=1}^Q$ lie in the interval $[\tilde{r}_{\min},\tilde{r}_{\max}]$ where 
    \begin{align}
        \tilde{r}_{\min} = \sqrt{r_{\mathrm{max}}^2(z_{\mathrm{ref}})-\frac{\lambda_0}{2} \min_{l_u}R_{l_u}(z_{\mathrm{ref}})},  \\
        \tilde{r}_{\max} = \sqrt{r_{\mathrm{max}}^2(z_{\mathrm{ref}})+\frac{\lambda_0}{2} \min_{l_u}R_{l_u}(z_{\mathrm{ref}})}.
    \end{align}

\end{description}

\section{OAM-JRC Receive Processing}     \label{sec:algorithm}
Here, we adopt the OAM-based MDM between the radar and JRC frames. Assume that $\mu$ percent of the OAM carriers are allocated to radar and the rest to JRC. $\mu$ can be regarded as the sharing factor in JRC. 

\subsection{OAM-based Target Position Parameters Recovery}        \label{JRC_Radar}
To obtain the radar-only signal model, we set the symbols $a_u$ in (\ref{eq:SM_4}) to be ones in the communication-free frames. Then, by stacking all the slices of data vector with different OAM states in (\ref{eq:SM_6}), we have
\begin{align}              \label{eq:JRC_A1}
    \mathbf{x}_r(t_s)\! \!= \! \sum_{q=1}^Q  \widetilde{\rho}_q(t_s) \mathbf{a}_{Tr}(r_q,\varphi_q) \otimes  \mathbf{b}(R_q) \otimes  \mathbf{a}_R(\psi_q) \!+\! \mathbf{n}_r(t_s),
\end{align}
where 
\begin{align}
    \mathbf{a}_{Tr}(r_q,\varphi_q) &= \left[e^{\mathrm{j}\Phi_q^{l_{-\mu U}}},\ldots,e^{\mathrm{j}\Phi_q^{l_{-1}}},e^{\mathrm{j}\Phi_q^{l_{0}}},e^{\mathrm{j}\Phi_q^{l_{1}}}\ldots,e^{\mathrm{j}\Phi_q^{l_{\mu U}}}\right]^T \notag \\
    &~~~\in \mathbb{C}^{(2\mu U+1) \times 1}.
\end{align}

The $(2\mu U +1) MN\times (2\mu U+1) MN$ covariance matrix of data vector $\mathbf{x}_r(t_s)$ is obtained as
\begin{align}              \label{eq:JRC_A2}
\!\!\!     \mathbf{R}_r \! \!&= \! \mathrm{E}\{ \mathbf{x}_r(t_s) \mathbf{x}_r^H(t_s)  \}    \notag \\
    &= \! (\mathbf{A}_{Tr} \odot \mathbf{B} \odot\mathbf{A}_R) \mathbf{R}_{\rho}  (\mathbf{A}_{Tr} \odot \mathbf{B} \odot\mathbf{A}_R)^H \! \!+ \! \sigma_n^2 \mathbf{I}_{(2\mu U+1) MN},
\end{align}
where $\mathbf{A}_{Tr}=[\mathbf{a}_{Tr}(r_1,\varphi_1),\mathbf{a}_{Tr}(r_2,\varphi_2),\ldots,\mathbf{a}_{Tr}(r_Q,\varphi_Q)] \in \mathbb{C}^{(2\mu U+1) \times Q}$, $\mathbf{B}=[\mathbf{b}(R_1),\mathbf{b}(R_2),\ldots,\mathbf{b}(R_Q)]\in \mathbb{C}^{M\times Q}$, $\mathbf{A}_R=[\mathbf{a}_R(\psi_1),\mathbf{a}_R(\psi_2),\ldots,\mathbf{a}_R(\psi_Q)] \in  \mathbb{C}^{N\times Q}$. $\mathbf{R}_{\rho}$ is a diagonal matrix with the entries $\{\sigma_1^2,\sigma_2^2,\ldots,\sigma_Q^2 \}$. 

Based on the assumption that the signals and noise are independent with each other, the covariance matrix $\mathbf{R}_r$ can be decomposed into two mutually orthogonal parts:
\begin{align}       \label{eq:JRC_A3_1}
    \mathbf{R}_r &=  [\mathbf{U}_{rs} ~\mathbf{U}_{rn}] 
    \left[
     \begin{matrix}
        \bm{\Lambda}_{rs} & \mathbf{0}  \\
        \mathbf{0}   &   \bm{\Lambda}_{rn}
     \end{matrix}
    \right] 
    [\mathbf{U}_{rs} ~\mathbf{U}_{rn}]^H       \notag \\
    &= \mathbf{U}_{rs} \bm{\Lambda}_{rs} \mathbf{U}_{rs}^H + \mathbf{U}_{rn}  \bm{\Lambda}_{rn} \mathbf{U}_{rn}^H,
\end{align}
where $\bm{\Lambda}_{rs}$ is the $Q$-dimensional diagonal matrix containing the larger $Q$ eigenvalues of $ \mathbf{R}_r$ and $\bm{\Lambda}_{rn}$ is the $[(2\mu U+1) MN - Q]$-dimensional diagonal matrix containing the smaller $[(2\mu U+1) MN - Q]$ eigenvalues. The columns of the matrix $ [\mathbf{U}_{rs} ~\mathbf{U}_{rn}]$ are the eigenvectors of $\mathbf{R}_r$, where $\mathbf{U}_{rs}$ is the signal subspace containing the vectors corresponding to the larger $Q$ eigenvalues and $\mathbf{U}_{rn}$ is the noise subspace containing the vectors corresponding to the smaller $[(2\mu U+1) MN - Q]$ eigenvalues. Assume that there exists an invertible $Q\times Q$ matrix $\mathbf{T}$ such that 
\begin{align}             \label{eq:JRC_A2_2}
    \mathbf{U}_{rs}=(\mathbf{A}_{Tr} \odot \mathbf{B} \odot\mathbf{A}_R)\mathbf{T}.
\end{align}
The sufficient conditions for such a $\mathbf{T}$ to exist are included in Theorem \ref{Theo:RG} of Section \ref{sec:PA}.

Denote the first $(\mu U +1)$ rows of $\mathbf{A}_{Tr}$, the first $(M-1)$ rows of $\mathbf{B}$, and the first $(N-1)$ rows of $\mathbf{A}_R$ as $\mathbf{A}_{Tr1}$, $\mathbf{B}_1$, and $\mathbf{A}_{R1}$, respectively. The last $(\mu U + 1)$ rows of $\mathbf{A}_{Tr}$, the last $(M-1)$ rows of $\mathbf{B}$, and the last $(N-1)$ rows of $\mathbf{A}_R$ are denoted as $\mathbf{A}_{Tr2}$, $\mathbf{B}_2$, and $\mathbf{A}_{R2}$, respectively. 

For the matrices $\mathbf{B}$ and $\mathbf{A}_R$, each couple of submatrices along the same dimension are related as $\mathbf{B}_2=\mathbf{B}_1 \bm{\Psi}$ and $\mathbf{A}_{R2}=\mathbf{A}_{R1} \bm{\Omega}$ where $\bm{\Psi}=\mathrm{diag}([e^{\mathrm{j}2\pi \Delta f R_1/c},e^{\mathrm{j}2\pi \Delta f R_2/c},\ldots,e^{\mathrm{j}2\pi \Delta f R_Q/c}])$ and $\bm{\Omega}=\mathrm{diag}([e^{-\mathrm{j}2\pi d\sin(\psi_1)f_0/c},e^{-\mathrm{j}2\pi d\sin(\psi_2)f_0/c},\ldots,e^{-\mathrm{j}2\pi d\sin(\psi_Q)f_0/c}])$. By selecting the specific rows from $\mathbf{U}_{rs}$, we can construct
\begin{align}                  \label{eq:JRC_A4}
         \mathbf{U}_{B_1} &=  (\mathbf{A}_{Tr} \odot \mathbf{B}_1 \odot\mathbf{A}_R)\mathbf{T},     \notag \\
         \mathbf{U}_{B_2} &=  (\mathbf{A}_{Tr} \odot \mathbf{B}_2 \odot\mathbf{A}_R) \mathbf{T}   \notag \\  
         &= (\mathbf{A}_{Tr} \odot \mathbf{B}_1 \odot\mathbf{A}_R) \bm{\Psi }\mathbf{T}   \notag \\  
         &=  \mathbf{U}_{B_1}   \mathbf{T}^{-1}  \bm{\Psi} \mathbf{T}.
\end{align}
Similarly, we have
\begin{align}                  \label{eq:JRC_A5}
         \mathbf{U}_{A_{R1}} &=  (\mathbf{A}_{Tr} \odot \mathbf{B} \odot\mathbf{A}_{R1})\mathbf{T},   \notag \\
         \mathbf{U}_{A_{R2}} &=  (\mathbf{A}_{Tr} \odot \mathbf{B} \odot\mathbf{A}_{R2})\mathbf{T}    \notag \\
         &=(\mathbf{A}_{Tr} \odot \mathbf{B} \odot\mathbf{A}_{R1}) \bm{\Omega} \mathbf{T}       \notag \\
         &= \mathbf{U}_{A_{R1}} \mathbf{T}^{-1} \bm{\Omega}\mathbf{T}.
\end{align}
Then, based on (\ref{eq:JRC_A4}) and (\ref{eq:JRC_A5}), we have
\begin{align}
     \mathbf{U}_{B_1}^{\dag}  \mathbf{U}_{B_2} &=  \mathbf{T}^{-1}  \bm{\Psi} \mathbf{T},          \label{eq:JRC_A6_1}   \\
     \mathbf{U}_{A_{R1}}^{\dag} \mathbf{U}_{A_{R2}} &=  \mathbf{T}^{-1}  \bm{\Omega} \mathbf{T}  .  \label{eq:JRC_A6_2} 
\end{align}
Obviously, $\mathbf{T}$ and $\bm{\Psi}$ can be obtained using the eigenvalue decomposition of $ \mathbf{U}_{B_1}^{\dag}  \mathbf{U}_{B_2}$, up to permutation. Denote the resulting matrices by $\widehat{\bm{\Psi}}$ and $\widehat{\mathbf{T}}$. Compute $\widehat{\bm{\Omega}}$ as $\widehat{\bm{\Omega}}= \widehat{\mathbf{T}} (\mathbf{U}_{A_{R1}}^{\dag} \mathbf{U}_{A_{R2}}) \widehat{\mathbf{T}}^{-1}  $. Then, the recovered bi-static ranges and DoAs are
\begin{align}                 \label{eq:JRC_A7}
\widehat{R}_q = \frac{\angle [\widehat{\bm{\Psi}}]_{q,q}}{2\pi \Delta f/c}, ~~  \widehat{\psi}_q = -\arcsin\left(\frac{\angle [\widehat{\bm{\Omega}}]_{q,q}}{2\pi df_0/c} \right).
\end{align}

For the matrix $\mathbf{A}_{Tr}$, consider the phase term $\Phi_q^{l_u}$ which is defined in (\ref{eq:SM_5}). Because the radial distance $r_q$ and azimuthal angle $\varphi_q$ are coupled with each other, a new decoupling method is explored here. Based on the fact that $R_{l_u}(z_{\mathrm{ref}})=R_{-l_u}(z_{\mathrm{ref}})$ \cite{XGe2017}, we have
\begin{align}
    \Phi_q^{l_u} + \Phi_q^{-l_u} &= -2\pi \frac{r_{q}^2- r^2_{\mathrm{max}}(z_{\mathrm{ref}})}{\lambda_0 R_{l_u}(z_{\mathrm{ref}})}    \notag  \\ 
                                 &= -2\pi \frac{r_{q}^2- r^2_{\mathrm{max}}(z_{\mathrm{ref}})}{\lambda_0 R_{u\delta }(z_{\mathrm{ref}})}  ,               \\
    \Phi_q^{l_u} - \Phi_q^{-l_u} &= -2l_u \varphi_q= -2 u\delta \varphi_q .
\end{align}
Denote $\widetilde{\mathbf{A}}_{Tr1} \triangleq \mathrm{flipud}(\mathbf{A}_{Tr1})$. By selecting the specific rows from $\mathbf{U}_{rs}$, we can construct 
\begin{align}             \label{eq:JRC_A9} 
    \mathbf{U}_{A_{Tr1}} &=  (\widetilde{\mathbf{A}}_{Tr1} \odot \mathbf{B} \odot\mathbf{A}_{R})\mathbf{T},        \notag   \\
    \mathbf{U}_{A_{Tr2}} &=  (\mathbf{A}_{Tr2} \odot \mathbf{B} \odot\mathbf{A}_{R})\mathbf{T}.                 
\end{align}
Applying the inverse of $\mathbf{T}$ on the left of (\ref{eq:JRC_A9}) and multiplying the two matrices element by element, we can get
\begin{align}              \label{eq:JRC_A10}
& ~  (\mathbf{U}_{A_{Tr1}} \mathbf{T}^{-1}) \diamond (\mathbf{U}_{A_{Tr2}} \mathbf{T}^{-1})              \notag \\
&= (\widetilde{\mathbf{A}}_{Tr1} \odot \mathbf{B} \odot\mathbf{A}_{R}) \diamond (\mathbf{A}_{Tr2} \odot \mathbf{B} \odot\mathbf{A}_{R})      \notag \\
&= (\widetilde{\mathbf{A}}_{Tr1} \diamond \mathbf{A}_{Tr2}) \odot (\mathbf{B} \diamond \mathbf{B}) \odot (\mathbf{A}_{R} \diamond \mathbf{A}_{R}).
\end{align}
Similarly, we construct 
\begin{align}             \label{eq:JRC_A11}
    & ~  (\mathbf{U}_{A_{Tr1}} \mathbf{T}^{-1}) \diamond (\mathbf{U}_{A_{Tr2}} \mathbf{T}^{-1})^*             \notag \\
&= (\widetilde{\mathbf{A}}_{Tr1} \odot \mathbf{B} \odot\mathbf{A}_{R}) \diamond (\mathbf{A}_{Tr2}^* \odot \mathbf{B}^* \odot\mathbf{A}_{R}^*)      \notag \\
&= (\widetilde{\mathbf{A}}_{Tr1} \diamond \mathbf{A}_{Tr2}^*) \odot (\mathbf{B} \diamond \mathbf{B}^*) \odot (\mathbf{A}_{R} \diamond \mathbf{A}_{R}^*)    \notag \\
&= (\widetilde{\mathbf{A}}_{Tr1} \diamond \mathbf{A}_{Tr2}^*) \odot \mathbf{1}_{M\times Q} \odot \mathbf{1}_{N\times Q}.
\end{align}
By substituting the estimated $\{\widehat{R}_q\}_{q=1}^Q$, $\{\widehat{\psi}_q\}_{q=1}^Q$ and $\widehat{\mathbf{T}}$ into (\ref{eq:JRC_A10}) and (\ref{eq:JRC_A11}), the matrices $\mathbf{A}^{+} \triangleq \widetilde{\mathbf{A}}_{Tr1} \diamond \mathbf{A}_{Tr2}$ and $\mathbf{A}^{-} \triangleq \widetilde{\mathbf{A}}_{Tr1} \diamond \mathbf{A}_{Tr2}^*$ can be recovered. The $q$-th column of $\mathbf{A}^{+}$ and $\mathbf{A}^{-}$, which corresponds to the $q$-th target, takes the form
\begin{align}
    [\mathbf{A}^{+}]_q = \left[
       \begin{matrix}
           \mathrm{exp}\left( -\mathrm{j}2\pi \frac{r_{q}^2- r^2_{\mathrm{max}}(z_{\mathrm{ref}})}{\lambda_0 R_{0}(z_{\mathrm{ref}})} \right)   \\
           \mathrm{exp}\left( -\mathrm{j}2\pi \frac{r_{q}^2- r^2_{\mathrm{max}}(z_{\mathrm{ref}})}{\lambda_0 R_{\delta }(z_{\mathrm{ref}})} \right)  \\
           \vdots   \\
           \mathrm{exp}\left( -\mathrm{j}2\pi \frac{r_{q}^2- r^2_{\mathrm{max}}(z_{\mathrm{ref}})}{\lambda_0 R_{\mu U \delta }(z_{\mathrm{ref}})} \right)
       \end{matrix}
    \right]   \in \mathbb{C}^{(\mu U+1)\times 1}
\end{align}
and
\begin{align}
    [\mathbf{A}^{-}]_q = [1, e^{ -\mathrm{j}2 \delta \varphi_q}, \ldots,e^{ -\mathrm{j}2 \mu U \delta \varphi_q}] \in \mathbb{C}^{(\mu U+1)\times 1},
\end{align}
respectively. Then, the recovered radial distance and azimuthal angle of the $q$-th scatterer can be calculated as
\begin{align}         \label{eq:JRC_A14}   
    \widehat{r}_q = \frac{1}{\mu U+1}\sum_{i=1}^{\mu U+1} \sqrt{\frac{-\angle[\mathbf{A}^{+}]_{i,q}\lambda_0 R_{(i-1)\delta}(z_{\mathrm{ref}})}{2\pi}+ r^2_{\mathrm{max}}(z_{\mathrm{ref}})}         
\end{align}
and
\begin{align}   \label{eq:JRC_A15}
    \widehat{\varphi}_q =  \frac{1}{2\mu U \delta} \sum_{i=2}^{\mu U+1} \angle\{[\mathbf{A}^{-}]_{i-1,q}[\mathbf{A}^{-}]^*_{i,q}\}.  
\end{align}

Obviously, it can be observed from (\ref{eq:JRC_A7}), (\ref{eq:JRC_A14}), and (\ref{eq:JRC_A15}) that the parameters $\widehat{R}_q$, $\widehat{\psi}_q$, $\widehat{r}_q$, and $\widehat{\varphi}_q $ are automatically paired with each other due to the same permutation of $\widehat{\mathbf{T}}$.

\subsection{OAM-based Target Velocity Estimation}
By invoking (\ref{eq:SM_4}), the phase modulation term of received signal for the $q$-th scatterer from the $m$-th transmit element to the $n$-th receive element on the $u$-th OAM state before approximation can be rewritten as 
\begin{align}          \label{eq:JRC_Doppler_1}
    \Psi_q(m,n,l_u) &= - 2\pi\frac{ f_m R_q}{c} - \pi \frac{r_{q}^2- r^2_{\mathrm{max}}(z_{\mathrm{ref}})}{\lambda_m R_{l_u}(z_{\mathrm{ref}})} \notag \\
    &~~~ - l_u \varphi_q  - 2 \pi d \sin(\psi_q)n/{\lambda_m}.
\end{align}
Subsequently, the Doppler frequency shift of the echo signal for the $q$-th scatterer can be derived as
\begin{align}
    f_{D_q}(m,n,l_u) = \frac{1}{2\pi} \frac{\mathrm{d}\Psi_q(m,n,l_u)}{\mathrm{d}t} = f_{LD_q} + f_{RD_q}
\end{align}
where $f_{LD_q}$ is the linear Doppler frequency shift induced by distance variation, including the bi-static distance $R_q$ variation and radial distance $r_q$ variation. $f_{RD_q}$ is the rotational Doppler frequency shift induced by azimuthal angle $\varphi_q$ change. Thus, the phase term corresponding to the linear Doppler frequency shift is
\begin{align}            \label{eq:JRC_Doppler_3}
    \Psi_{LD_q}(m,l_u)= - 2\pi\frac{ f_m R_q}{c} - \pi \frac{r_{q}^2- r^2_{\mathrm{max}}(z_{\mathrm{ref}})}{\lambda_m R_{l_u}(z_{\mathrm{ref}})}
\end{align}
and the phase term corresponding to the rotational Doppler frequency shift is
\begin{align}             \label{eq:JRC_Doppler_4}
    \Psi_{RD_q}(l_u) = - l_u \varphi_q.
\end{align}

Assume that the $q$-th scatterer is moving along the $x$-axis at the linear velocity $\nu_q$. Then, the linear Doppler frequency shift in the OAM-based radar system can be calculated as
\begin{align}             \label{eq:JRC_Doppler_5}
    f_{LD_q}(m,l_u) &= \frac{1}{2\pi} \frac{\mathrm{d}\Psi_{LD_q}(m,l_u)}{\mathrm{d}t}        \notag \\
    &= -\frac{f_m}{c}\frac{\mathrm{d}R_q}{\mathrm{d}t}- \frac{1}{2\lambda_m R_{l_u}(z_{\mathrm{ref}})} \frac{\mathrm{d}r_q^2}{\mathrm{d}t}
\end{align}
where
\begin{align}
    \frac{\mathrm{d}R_q}{\mathrm{d}t} &=   \frac{\mathrm{d}R_q^{(1)}}{\mathrm{d}t}  +  \frac{\mathrm{d}R_q^{(2)}}{\mathrm{d}t}        \notag \\
    &= \frac{\mathrm{d} \sqrt{r_q^2+ z_{\mathrm{ref}}^2} }{\mathrm{d}t} +\nu_q \sin \psi_q            \notag  \\
    &= \frac{r_q \nu_q \cos \varphi_q}{\sqrt{r_q^2+ z_{\mathrm{ref}}^2}} +\nu_q \sin \psi_q  
\end{align}
and
\begin{align}      \label{eq:JRC_Doppler_7}
    \frac{\mathrm{d}r_q^2}{\mathrm{d}t} =2r_q \frac{\mathrm{d}r_q}{\mathrm{d}t} = 2r_q \nu_q \cos \varphi_q.
\end{align}
Based on the fact $\Omega_q r_q =\nu_q \sin \varphi_q$ where $\Omega_q =  \mathrm{d}\varphi_q/\mathrm{d}t $ is the angular velocity of the $q$-th scatterer, the rotational Doppler frequency shift is 
\begin{align}            \label{eq:JRC_Doppler_9}
    f_{RD_q}(l_u) =  \frac{1}{2\pi} \frac{\mathrm{d}\Psi_{RD_q}(l_u)}{\mathrm{d}t} = \frac{-l_u \nu_q \sin\varphi_q}{2\pi r_q}.
\end{align}
Then, the total Doppler frequency shift of the $q$-th scatterer has the form
\begin{align}             \label{eq:JRC_Doppler_10}
    f_{D_q}(m,l_u) &= f_{LD_q}(m,l_u) + f_{RD_q}(l_u)        \notag \\
    &= -\nu_q \left[ \frac{r_q \cos \varphi_q }{\lambda_m \sqrt{r_q^2 + z_{\mathrm{ref}}^2}} + \frac{\sin \psi_q}{\lambda_m} + \frac{r_q \cos \varphi_q}{\lambda_m R_{l_u}(z_{\mathrm{ref}})}  \right.      \notag  \\
    &~~~ \left. +  \frac{l_u \sin\varphi_q}{2\pi r_q} \right].
\end{align}

\subsection{Communication Symbols Recovery}           \label{JRC_Comm}
As stated previously, $(1-\mu)$ percent of OAM carriers are allocated to JRC frames which include the communication symbols. Considering that differential phase shift keying (DPSK) modulation is robust to constant phase shifts, we adopt binary DPSK coding in our system. Let $a_u$ in (\ref{eq:SM_2}) be DPSK symbols, modeled as $a_u= e^{\mathrm{j}\phi_u}$. Then, the received JRC signals are
\begin{align}
    \mathbf{x}_{rc1}(t_s) &= \sum_{q=1}^Q  \widetilde{\rho}_q(t_s) (\mathbf{a}_{c1} \diamond \mathbf{a}_{Trc1}(r_q,\varphi_q)) \otimes  \mathbf{b}(R_q) \otimes  \mathbf{a}_R(\psi_q) \notag \\
         & ~~~ + \mathbf{n}_{rc1}(t_s),     \\
    \mathbf{x}_{rc2}(t_s) &= \sum_{q=1}^Q  \widetilde{\rho}_q(t_s) (\mathbf{a}_{c2} \diamond \mathbf{a}_{Trc2}(r_q,\varphi_q)) \otimes  \mathbf{b}(R_q) \otimes  \mathbf{a}_R(\psi_q) \notag \\
         & ~~~ + \mathbf{n}_{rc2}(t_s),
\end{align}
where $\mathbf{a}_{c1}=[a_{-U},a_{-U+1},\ldots,a_{-\mu U-1}]^T$ is the communication symbol vector modulated by the negative OAM states $\{l_{-U},l_{-U+1},\ldots,l_{-\mu U-1} \}$ and $ \mathbf{a}_{Trc1}(r_q,\varphi_q) \in \mathbb{C}^{(1-\mu) U \times 1}$ has the form
\begin{align}
    \mathbf{a}_{Trc1}(r_q,\varphi_q) = [e^{\mathrm{j}\Phi_q^{l_{-U}}},e^{\mathrm{j}\Phi_q^{l_{-U+1}}},\ldots,e^{\mathrm{j}\Phi_q^{l_{-\mu U-1}}} ]^T.    
\end{align}
Similarly, $\mathbf{a}_{c2}=[a_{\mu U+1},a_{\mu U+2},\ldots,a_{U}]^T$ is the communication symbol vector modulated by the positive OAM states $\{l_{\mu U+1},l_{\mu U+2},\ldots,l_{U} \}$ and $ \mathbf{a}_{Trc2}(r_q,\varphi_q) \in \mathbb{C}^{(1-\mu) U \times 1}$ has the form
\begin{align}
    \mathbf{a}_{Trc2}(r_q,\varphi_q) = [e^{\mathrm{j}\Phi_q^{l_{\mu U+1}}},e^{\mathrm{j}\Phi_q^{l_{\mu U+2}}},\ldots,e^{\mathrm{j}\Phi_q^{l_{U}}} ]^T.    
\end{align}

Define the concatenated vector 
\begin{align}
    \mathbf{x}(t_s)&=  [\mathbf{x}_{rc1}(t_s)^T ~ \mathbf{x}_{r}(t_s)^T~ \mathbf{x}_{rc2}(t_s)^T]^T       \notag \\
                   &= \sum_{q=1}^Q  \widetilde{\rho}_q(t_s) [\mathrm{diag}(\mathbf{a}_{c})\mathbf{a}_{T}(r_q,\varphi_q)]  \otimes  \mathbf{b}(R_q) \otimes  \mathbf{a}_R(\psi_q)  \notag \\
                   &~~~~~ + \mathbf{n}(t_s),
\end{align}
where $\mathbf{a}_c = [\mathbf{a}_{c1}^T ~ \mathbf{1}_{(2\mu U+1)\times 1}^T ~ \mathbf{a}_{c2}^T ]^T \in \mathbb{C}^{(2U+1) \times 1}$ and
\begin{align}
    \mathbf{a}_{T}(r_q, \varphi_q) = \left[ 
                    \begin{matrix}
                        \mathbf{a}_{Trc1}(r_q, \varphi_q)  \\ 
                        \mathbf{a}_{Tr}(r_q, \varphi_q)  \notag \\
                        \mathbf{a}_{Trc2}(r_q, \varphi_q) 
                    \end{matrix}
                   \right]  \in \mathbb{C}^{(2U+1) \times 1}
\end{align}
is the multiplexing transmitting steering vector. $\mathbf{n}(t_s)=[\mathbf{n}_{rc1}^T(t_s)~ \mathbf{n}_r^T(t_s)~\mathbf{n}_{rc2}^T(t_s) ]^T \in \mathbb{C}^{(2U+1)MN \times 1}$. Our proposed OAM-based MDM strategy is depicted in Fig. \ref{MDM_strategy}. Denote $\mathbf{R} \triangleq \mathrm{E}\{ \mathbf{x}(t_s) \mathbf{x}^H(t_s)  \}$ as the multiplexing covariance matrix and $\mathbf{A}_{T}\triangleq [\mathbf{a}_{T}(r_1,\varphi_1),\mathbf{a}_{T}(r_2,\varphi_2),\ldots,\mathbf{a}_{T}(r_Q,\varphi_Q)] \in \mathbb{C}^{(2U+1) \times Q}$. Using the similar steps from (\ref{eq:JRC_A2}) to (\ref{eq:JRC_A2_2}), we can obtain the signal subspace matrix $\mathbf{U}_s$ which has the form
\begin{align}          \label{eq:JRC_B4}
    \mathbf{U}_s &= [\mathrm{diag}(\mathbf{a}_c)\mathbf{A}_{T} \odot \mathbf{B} \odot\mathbf{A}_R]\mathbf{T}  \notag  \\
         &= [\mathrm{diag}(\mathbf{a}_c) \otimes \mathbf{I}_{MN}](\mathbf{A}_{T} \odot \mathbf{B} \odot\mathbf{A}_R) \mathbf{T} .
\end{align}

\begin{figure}[t]
\centerline{\includegraphics[scale=0.40]{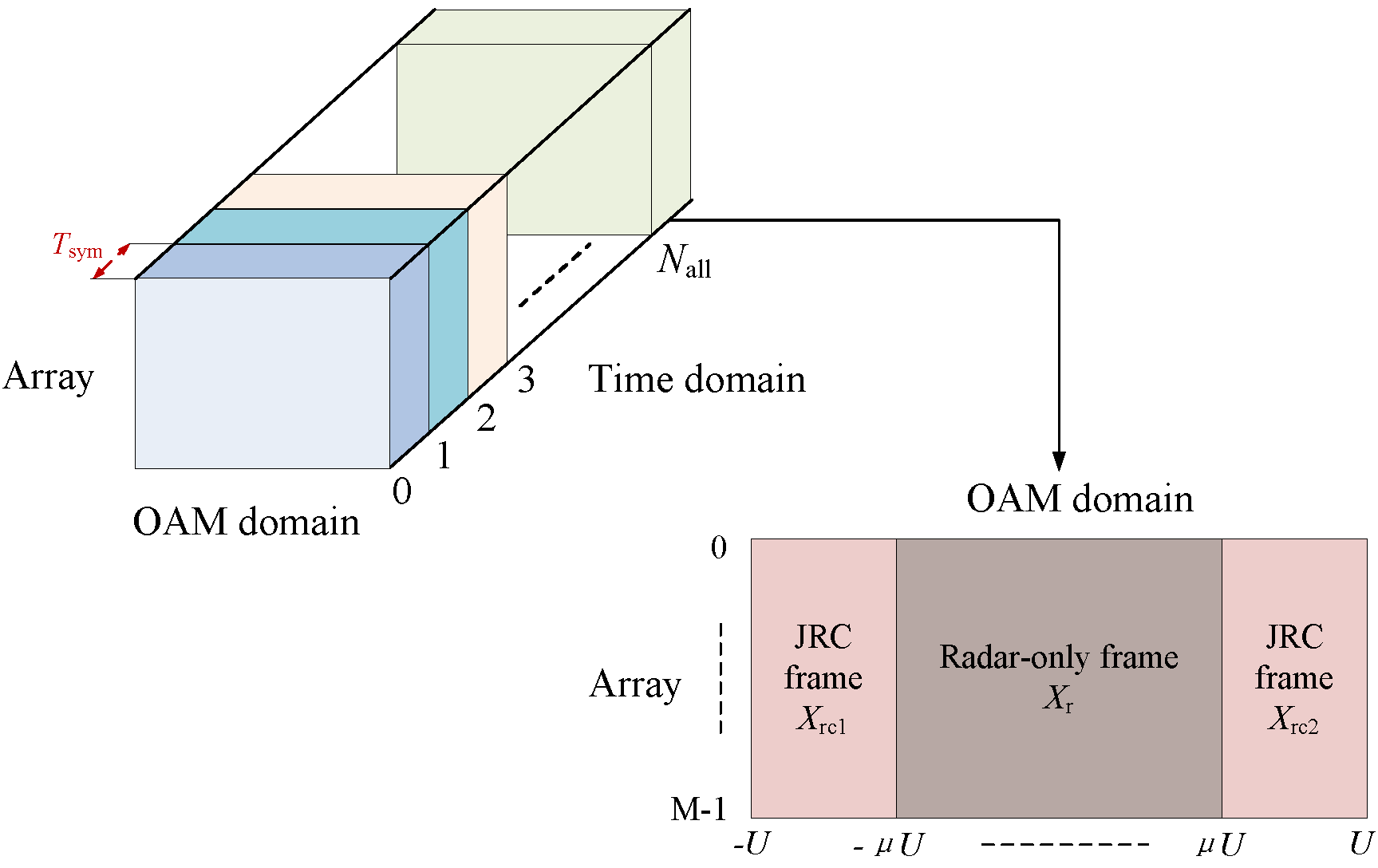}}
\caption{Our proposed OAM-based MDM strategy. The radar-only frame $X_r$ is multiplexed with the joint radar-communications frames $X_{rc1}$ and $X_{rc2}$ in OAM domain. }
\label{MDM_strategy}
\end{figure}


In this model, the maximum ratio combining (MRC) technique, which can fully utilize the transmitting and receiving diversity, is applied to maximize the signal-to-noise ratio (SNR) of the received signals. Considering that the unknown radar parameters and permutation matrix have been estimated in Section \ref{JRC_Radar}, namely $\{  \widehat{r}_q \}_{q=1}^Q$, $\{\widehat{\varphi}_q \}_{q=1}^{Q}$, $\{\widehat{R}_q \}_{q=1}^Q$, $\{\widehat{\psi}_q \}_{q=1}^Q$, and $\widehat{\mathbf{T}}$, we can recover the unknown communication symbols vector $\mathbf{a}_{c1}$ and $\mathbf{a}_{c2}$ by substituting the estimated radar parameters into (\ref{eq:JRC_B4}). Let $\mathbf{U}_s^{(u,i)}$ and $\mathbf{H}_{(u,i)}$ be the $[(u+U)MN+i]$-th row of $\mathbf{U}_s$ and $(\mathbf{A}_{T} \odot \mathbf{B} \odot\mathbf{A}_R) \mathbf{T}$, respectively, with $u \in [-U,-\mu U -1] \cup [\mu U +1, U]$ and $i\in [1,MN]$. Then, the $u$-th communication symbol can be estimated as
\begin{align}        \label{eq:JRC_B5}
    a_u = \frac{1}{MN}\sum_{i=1}^{MN} \frac{\mathbf{U}_s^{(u,i)}\mathbf{H}_{(u,i)}^H}{\mathbf{H}_{(u,i)}\mathbf{H}_{(u,i)}^H}.
\end{align}

\section{Performance Analysis}      \label{sec:PA}
\subsection{Recovery Guarantees of OAM-JRC}
We derive detailed theoretical recovery guarantees in the proposed OAM-JRC system. In particular, the conditions on the number of transmit (receive) array elements and OAM states required to retrieve the unknown parameters of $Q$ scatterers and communication symbols are discussed, which provide the reference for choosing the JRC sharing factor $\mu$. The following Theorem \ref{Theo:RG} provides lower bounds on the number of transmit (receive) array elements and OAM states required by the OAM-JRC system to guarantee perfect recovery of the unknown scatterer parameter set $\{\varphi_q, \psi_q, R_q, r_q \}_{q=1}^Q $ and communication symbol $a_u$.

\begin{theorem}       \label{Theo:RG}
Consider a bistatic FDA system with an $M(N)$-element transmit (receive) ULA and $d$, $\Delta f$ be the interelement spatial spacing and frequency increment, respectively. Each transmit element sends out a total of $2U+1$ OAM states with the inter-state spacing $\delta$. Assume that there are $Q$ scatterers with $R_{\mathrm{max}}$ being the maximum bi-static range and the JRC sharing factor is denoted as $\mu$. If
\begin{description}
\item[C1] $ d\leq \frac{c}{2f_0}$,\vspace{4pt}
\item[C2] $\Delta f \leq \frac{c}{R_{\mathrm{max}}}$, \vspace{4pt}
\item[C3] $ N>Q  $,\vspace{4pt}
\item[C4] $ M>Q $, \vspace{4pt}
\item[C5] $ \frac{1}{U} \leq \mu \leq 1 $,
\end{description}
then the unknown parameter set $\{\varphi_q, \psi_q, R_q, r_q \}_{q=1}^Q$ of $Q$ scatterers and communication symbols $a_u$ can be perfectly recovered from the multiplexing covariance matrix $\mathbf{R}$.
\end{theorem}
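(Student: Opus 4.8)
The plan is to give a constructive proof that runs parallel to the recovery pipeline of Section~\ref{sec:algorithm}, showing that in the noiseless regime each stage of the algorithm inverts exactly, with conditions C1--C5 (together with the standing assumptions A1--A2) supplying precisely the hypothesis each stage needs. The whole argument reduces to the ESPRIT identifiability relation (\ref{eq:JRC_A2_2}): if the multiplexing manifold $\mathbf{A}_{T}\odot\mathbf{B}\odot\mathbf{A}_{R}$ has full column rank $Q$, then the source covariance $\mathbf{R}_{\rho}$ (diagonal, strictly positive) forces the signal subspace $\mathbf{U}_{s}$ to coincide with its range, so an invertible $\mathbf{T}$ exists and the diagonal shift generators can be extracted. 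Hence the first task is a rank count.

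First I would establish the subspace factorization. By A1 the generators $e^{-\mathrm{j}2\pi d\sin(\psi_q)f_0/c}$ and $e^{\mathrm{j}2\pi\Delta f R_q/c}$ are pairwise distinct, so $\mathbf{A}_{R}$ and $\mathbf{B}$ are Vandermonde; C3 ($N>Q$) and C4 ($M>Q$) then make each of them full column rank $Q$. Since a Khatri--Rao product has full column rank $Q$ as soon as one factor does and the remaining factors have no vanishing columns (all entries here are unit-modulus, so no column vanishes), $\mathbf{A}_{T}\odot\mathbf{B}\odot\mathbf{A}_{R}$ has rank $Q$ and (\ref{eq:JRC_A2_2}) holds with invertible $\mathbf{T}$. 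The same C3, C4 guarantee that the row-deleted factors $\mathbf{B}_{1}$ and $\mathbf{A}_{R1}$ retain $M-1\geq Q$ and $N-1\geq Q$ rows, hence full column rank, so the pseudoinverses in (\ref{eq:JRC_A6_1})--(\ref{eq:JRC_A6_2}) are exact left inverses and the relations become genuine similarities $\mathbf{T}^{-1}\bm{\Psi}\mathbf{T}$, $\mathbf{T}^{-1}\bm{\Omega}\mathbf{T}$. Distinctness of the diagonal entries (A1) renders the eigendecomposition of $\mathbf{U}_{B_1}^{\dag}\mathbf{U}_{B_2}$ unique up to permutation, fixing $\widehat{\mathbf{T}}$, which then diagonalizes $\mathbf{U}_{A_{R1}}^{\dag}\mathbf{U}_{A_{R2}}$ with the \emph{same} permutation. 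Finally C1 ($d\leq c/2f_0$) and C2 ($\Delta f\leq c/R_{\max}$), with the ranges in A1, confine each phase to a single $2\pi$ interval, so (\ref{eq:JRC_A7}) inverts the generators injectively onto the true $R_q,\psi_q$.

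For the coupled radial-distance/azimuth stage I would invoke the OAM symmetry $R_{l_u}(z_{\mathrm{ref}})=R_{-l_u}(z_{\mathrm{ref}})$ to separate the sum and difference channels $\mathbf{A}^{+}$ and $\mathbf{A}^{-}$. The lower bound in C5, i.e.\ $\mu U\geq 1$, is exactly what guarantees at least one nonzero OAM mode per side is allocated to the radar frame, so $\mathbf{A}^{+},\mathbf{A}^{-}\in\mathbb{C}^{(\mu U+1)\times Q}$ carry enough entries for the averaging in (\ref{eq:JRC_A14}) and the phase-differencing in (\ref{eq:JRC_A15}) (the upper bound $\mu\leq 1$ is merely the feasibility constraint $\mu U\leq U$ for a valid radar/JRC split); assumption A2 then makes $r_q\mapsto\angle[\mathbf{A}^{+}]_{i,q}$ and $\varphi_q\mapsto\angle[\mathbf{A}^{-}]_{i,q}$ injective on their stated intervals, yielding unique $r_q,\varphi_q$. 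All four parameters are read out through the single permutation in $\widehat{\mathbf{T}}$, so they are automatically paired. For communications, with the radar parameters and $\widehat{\mathbf{T}}$ fixed the manifold $(\mathbf{A}_{T}\odot\mathbf{B}\odot\mathbf{A}_{R})\mathbf{T}$ in (\ref{eq:JRC_B4}) is known, its rows are nonzero, and the MRC estimator (\ref{eq:JRC_B5}) returns each $a_u$ exactly, completing the recovery.

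The step I expect to be the main obstacle is the rank bookkeeping: proving that the triple Khatri--Rao manifold and, crucially, its \emph{row-selected} subarrays retain full column rank $Q$ under only C3 and C4 rather than a stronger Kruskal-type condition, since this is what makes the pseudoinverses exact and the ESPRIT eigendecomposition unambiguous. The secondary delicacy is checking that the interval choices in A2 are tight enough to wrap each decoupled phase within one period, so that $r_q$ and $\varphi_q$ are uniquely invertible from $\mathbf{A}^{+}$ and $\mathbf{A}^{-}$.
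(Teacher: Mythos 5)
Your proposal is correct and follows essentially the same route as the paper's proof: Vandermonde/full-column-rank arguments for $\mathbf{B}$ and $\mathbf{A}_R$ under \textbf{C1}--\textbf{C4}, full column rank of the triple Khatri--Rao manifold (your direct ``one full-rank factor plus non-vanishing columns'' argument is exactly what the paper obtains from the Kruskal-rank lemma with $\mathrm{Krank}(\mathbf{A}_{Tr})\geq 1$), left-invertibility of the row-selected subarrays to make the ESPRIT relations exact similarities, \textbf{C5} (i.e.\ $\mu U+1\geq 2$) for the $\mathbf{A}^{+}/\mathbf{A}^{-}$ decoupling of $r_q,\varphi_q$, and auto-pairing plus substitution into (\ref{eq:JRC_B4})--(\ref{eq:JRC_B5}) for the communication symbols. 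No substantive gaps; the only cosmetic slip is writing $\mathbf{A}_{T}$ where (\ref{eq:JRC_A2_2}) uses the radar-frame manifold $\mathbf{A}_{Tr}$.
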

\begin{proof}
First of all, we prove that there exists an $Q\times Q$ invertible matrix $\mathbf{T}$ such that (\ref{eq:JRC_A2_2}) holds under \upshape\textbf{C1}-\textbf{C4}. It can be seen from (\ref{eq:JRC_A2}) that $\mathbf{B} $ and $\mathbf{A}_R$ are both Vandermonde matrices which have the size of $M\times Q$ and $N\times Q$, respectively, and they have distinct columns based on the assumption \upshape\textbf{A1} and conditions \upshape\textbf{C1}-\textbf{C2}. Accordingly, if conditions \upshape\textbf{C3}-\textbf{C4} hold, $\mathbf{B} $ and $\mathbf{A}_R$ have full column rank, namely 
\begin{align}            \label{eq:PA_1}
    \mathrm{rank}(\mathbf{B}) = \mathrm{rank}(\mathbf{A}_R) = Q. 
\end{align}
Recall the definition of Kruskal rank as follows: 
\begin{definition}            \label{Def:Krank}
   \cite{harshman1984} The Kruskal rank of a matrix $\mathbf{A}$, denoted as $\mathrm{Krank}(\mathbf{A})$, is the maximum number $\kappa$ such that any $\kappa$ columns of $\mathbf{A}$ are linearly independent. 
\end{definition}

From (\ref{eq:PA_1}) and Definition \ref{Def:Krank}, we have 
\begin{align}
     \mathrm{Krank}(\mathbf{B}) = \mathrm{Krank}(\mathbf{A}_R) = Q.
\end{align}
Then, we evaluate the rank of $\mathbf{B} \odot \mathbf{A}_R$. By invoking the Kruskal rank property of Khatri-Rao product for any two matrices \cite{NDSidiropoulos2000}, it holds that
\begin{align}                   \label{eq:PA_3}
     \mathrm{Krank}(\mathbf{B} \odot \mathbf{A}_R) & \geq \min (\mathrm{Krank}(\mathbf{B}) + \mathrm{Krank}(\mathbf{A}_R)-1, Q)    \notag  \\
         & = \min (2Q-1, Q)     \notag \\
         & = Q.
\end{align}
Considering that $\mathbf{B} \odot \mathbf{A}_R $ is an overdetermined matrix with the size of $MN \times Q$, namely $MN > Q$, it can be obtained that
\begin{align}                   \label{eq:PA_4}
    \mathrm{rank}(\mathbf{B} \odot \mathbf{A}_R) = \mathrm{Krank}(\mathbf{B} \odot \mathbf{A}_R) =Q.
\end{align}
Next, we analyze the rank of $\mathbf{A}_{Tr} \odot \mathbf{B} \odot\mathbf{A}_R$. From the Kruskal-rank property of Khatri-Rao product in (\ref{eq:PA_3}) and the result in (\ref{eq:PA_4}), we have
\begin{align}                   \label{eq:PA_5}
    \mathrm{Krank}(\mathbf{A}_{Tr} \odot \mathbf{B} \odot\mathbf{A}_R) & \geq \min (\mathrm{Krank}(\mathbf{A}_{Tr}) + \mathrm{Krank}(\mathbf{B} \odot\mathbf{A}_R)              \notag \\
    &~~~~~~~~~~  -1, Q)             \notag \\
    & = \min (\mathrm{Krank}(\mathbf{A}_{Tr}) + Q -1, Q).
\end{align}
It is obvious from Definition \ref{Def:Krank} that the Kruskal rank of a matrix $\mathbf{A}$ is greater than or equal to 0. Note that $\mathrm{Krank}(\mathbf{A}) = 0$ if and only if $\mathbf{A}$ contains at least one identically zero column. Hence, for the matrix $\mathbf{A}_{Tr}$ in our case, we have $\mathrm{Krank}(\mathbf{A}_{Tr}) \geq 1$, leading to
\begin{align}
     \mathrm{Krank}(\mathbf{A}_{Tr} \odot \mathbf{B} \odot\mathbf{A}_R) & \geq \min (\mathrm{Krank}(\mathbf{A}_{Tr}) + Q -1, Q)   \notag \\
     & = Q.
\end{align}
Similar to (\ref{eq:PA_4}), for the overdetermined matrix $\mathbf{A}_{Tr} \odot \mathbf{B} \odot\mathbf{A}_R$ with the size of $(2\mu U +1)MN \times Q$, we have
\begin{align}        \label{eq:PA_7}
    \mathrm{rank}(\mathbf{A}_{Tr} \odot \mathbf{B} \odot\mathbf{A}_R) = \mathrm{Krank}(\mathbf{A}_{Tr} \odot \mathbf{B} \odot\mathbf{A}_R) =Q.
\end{align}
Applying eigen-decomposition in (\ref{eq:JRC_A3_1}), we can get
\begin{align}
    (\mathbf{A}_{Tr} \odot \mathbf{B} \odot\mathbf{A}_R)\mathbf{R}_{\rho} (\mathbf{A}_{Tr} \odot \mathbf{B} \odot\mathbf{A}_R)^H \mathbf{U}_{rn} =0.
\end{align}
Based on (\ref{eq:PA_7}) and considering that $\mathbf{R}_{\rho}$ is a diagonal matrix which has full rank with $ \mathrm{rank}(\mathbf{R}_{\rho})=Q$, it holds that 
\begin{align}
    (\mathbf{A}_{Tr} \odot \mathbf{B} \odot\mathbf{A}_R)^H \mathbf{U}_{rn}=0
\end{align}
which implies that $\mathbf{U}_{rn}$ is the null space of $\mathbf{A}_{Tr} \odot \mathbf{B} \odot\mathbf{A}_R$ and $\mathbf{U}_{rs}$. Consequently, there exists an $Q\times Q$ invertible matrix $\mathbf{T}$ such that (\ref{eq:JRC_A2_2}) holds. 

Secondly, we prove that $\mathbf{U}_{B_1}$ in (\ref{eq:JRC_A6_1}) and $\mathbf{U}_{A_{R1}}$ in (\ref{eq:JRC_A6_2}) are left-invertible. By using the similar steps from (\ref{eq:PA_5}) to (\ref{eq:PA_7}) under \upshape\textbf{C1}-\textbf{C4} and \upshape\textbf{A1}, we have 
\begin{align}
    \mathrm{rank}(\mathbf{A}_{Tr} \odot \mathbf{B} \odot\mathbf{A}_{R1}) = \mathrm{Krank}(\mathbf{A}_{Tr} \odot \mathbf{B} \odot\mathbf{A}_{R1}) =Q
\end{align}
and
\begin{align}
    \mathrm{rank}(\mathbf{A}_{Tr} \odot \mathbf{B}_1 \odot\mathbf{A}_R) = \mathrm{Krank}(\mathbf{A}_{Tr} \odot \mathbf{B}_1 \odot\mathbf{A}_R) =Q.
\end{align}
Considering that $ \mathbf{T} \in \mathbb{C}^{Q\times Q}$ is non-singular, we can get
\begin{align}
    \mathrm{rank}(\mathbf{U}_{B_1}) = \mathrm{rank}[(\mathbf{A}_{Tr} \odot \mathbf{B}_1 \odot\mathbf{A}_R) \mathbf{T}] = Q 
\end{align}
and
\begin{align}
    \mathrm{rank}(\mathbf{U}_{A_{R1}}) = \mathrm{rank}[(\mathbf{A}_{Tr} \odot \mathbf{B} \odot\mathbf{A}_{R1}) \mathbf{T}] = Q 
\end{align}
which imply that $\mathbf{U}_{B_1} $ and  $\mathbf{U}_{A_{R1}}$ are both full column column rank, namely left-invertible. Therefore,  the bi-static ranges $\{R_q\}_{q=1}^Q$ and DoAs $\{\psi_q\}_{q=1}^Q$ can be recovered uniquely. 

As soon as $\{R_q\}_{q=1}^Q$ and $\{\psi_q\}_{q=1}^Q$ are estimated, $\mathbf{A}^{+}$ and $\mathbf{A}^{-}$ can be calculated based on (\ref{eq:JRC_A10}) and (\ref{eq:JRC_A11}). Then, the azimuthal angles $\{\varphi_q \}_{q=1}^Q$ and radial distances $\{r_q\}_{q=1}^Q$ can be recovered uniquely from (\ref{eq:JRC_A14}) and (\ref{eq:JRC_A15}) as long as $\mu U +1 \geq 2$, namely $\mu \geq \frac{1}{U}$. Considering that $\mu$ is defined as the JRC sharing factor with $\mu \leq 1$, the sufficient condition for unique recovery of $\{\varphi_q \}_{q=1}^Q$ and $\{r_q\}_{q=1}^Q$ can be rewritten as \textbf{C5}: $\frac{1}{U} \leq \mu \leq 1 $. 

Obviously, all the recovered target position parameters $\{\varphi_q, \psi_q, R_q, r_q \}_{q=1}^Q $ are auto-paired due to the same permutation of $\mathbf{T}$. By substituting the estimated $\{\widehat{\varphi}_q, \widehat{\psi}_q, \widehat{R}_q, \widehat{r}_q \}$  into (\ref{eq:JRC_B4}) and (\ref{eq:JRC_B5}), the communication symbol vectors $\mathbf{a}_{c1}$ and $\mathbf{a}_{c2}$ can be recovered uniquely. 

\end{proof} 

\subsection{CRLB of Radar Target Parameters}
In this subsection, we explore the statistical bounds of radar target parameters, namely the CRLB of position parameters $\{\varphi_q, \psi_q, R_q, r_q \}_{q=1}^Q$ and velocity parameters $\{\nu_q,\Omega_q  \}_{q=1}^Q$. 

Firstly, we derive the CRLB of unknown position parameters $\bm{\gamma} = [\bm{\varphi}^T~ \bm{r}^T ~ \bm{R}^T~ \bm{\psi}^T]^T$, where $\bm{\varphi}= [\varphi_1,\ldots,\varphi_Q]^T$, $\bm{r} = [r_1,\ldots,r_Q]^T$, $\bm{R}=[R_1,\ldots,R_Q]^T$ and $ \bm{\psi} = [\psi_1,\ldots,\psi_Q]^T$. Here, the radar-only signal model in (\ref{eq:JRC_A1}) is considered. Since $ \mathbf{R}_r$ in (\ref{eq:JRC_A3_1}) is unavailable in practice, it can be substituted by its asymptotic unbiased estimate from $L_r$ available snapshots, namely
\begin{align}                \label{eq:CRLB_1}
    \widehat{\mathbf{R}}_r = \frac{1}{L_r} \sum_{t_s=0}^{L_r-1}  \mathbf{x}_r(t_s) \mathbf{x}^H_r(t_s).
\end{align}
The received signal vectors $\{  \mathbf{x}_r(t_s) \}_{t_s=1}^{L_r}$ are assumed to be i.i.d. Gaussian random variables, distributed as
\begin{align}                \label{eq:CRLB_1_1}
    \mathbf{x}_r(t_s) \sim \mathcal{CN}(\mathbf{0},\mathbf{R}_r).
\end{align}
Considering that the covariance matrix $\mathbf{R}_r$ is parameterized by $\bm{\gamma}$, the Fisher information matrix (FIM) $\mathbf{F}(\bm{\gamma})$ follows directly from the Slepian-Bangs formula \cite{PJSchreier2010} as 
\begin{align}         \label{eq:CRLB_2}
    \frac{1}{L_r} [\mathbf{F}(\bm{\gamma})] &= \mathrm{vec}^H \left( \frac{\partial \mathbf{R}_r}{\partial \bm{\gamma}^T} \right) \mathbf{W} \mathrm{vec} \left( \frac{\partial \mathbf{R}_r}{\partial \bm{\gamma}^T} \right)       \notag \\
    & = \left( \frac{\partial \mathrm{vec}(\mathbf{R}_r)}{\partial \bm{\gamma}^T} \right)^H  \mathbf{W}   \left( \frac{\partial \mathrm{vec}(\mathbf{R}_r)}{\partial \bm{\gamma}^T} \right)  \notag \\
    &= \left[\frac{\partial \mathbf{r}_r}{\partial \bm{\varphi}^T} ~ \frac{\partial \mathbf{r}_r}{\partial \bm{r}^T} ~ \frac{\partial \mathbf{r}_r}{\partial \bm{R}^T}~ \frac{\partial \mathbf{r}_r}{\partial \bm{\psi}^T} \right]^H  \mathbf{W}       \notag \\
    &~~~ \times \left[\frac{\partial \mathbf{r}_r}{\partial \bm{\varphi}^T} ~ \frac{\partial \mathbf{r}_r}{\partial \bm{r}^T} ~ \frac{\partial \mathbf{r}_r}{\partial \bm{R}^T}~ \frac{\partial \mathbf{r}_r}{\partial \bm{\psi}^T} \right]      \notag \\
    & = \left[\mathbf{D}_{\bm{\varphi}} ~ \mathbf{D}_{\bm{r}} ~ \mathbf{D}_{\bm{R}} ~ \mathbf{D}_{\bm{\psi}} \right]^H  \mathbf{W} \left[\mathbf{D}_{\bm{\varphi}} ~ \mathbf{D}_{\bm{r}} ~ \mathbf{D}_{\bm{R}} ~ \mathbf{D}_{\bm{\psi}} \right]
\end{align}
where $\mathbf{W} = \mathbf{R}_r^{-T} \otimes \mathbf{R}_r^{-1}$, $\mathbf{r}_r = \mathrm{vec}(\mathbf{R}_r)=  [(\mathbf{A}_{Tr} \odot \mathbf{B} \odot\mathbf{A}_R)^* \odot (\mathbf{A}_{Tr} \odot \mathbf{B} \odot\mathbf{A}_R)] \mathbf{r}_{\rho}$ with $\mathbf{r}_{\rho} = [\sigma_1^2, \sigma_2^2,\ldots,\sigma_Q^2]^T$, $\mathbf{D}_{\bm{\varphi}} = \partial \mathbf{r}_r/ \partial \bm{\varphi}^T$, $\mathbf{D}_{\bm{r}} = \partial \mathbf{r}_r / \partial \bm{r}^T$, $\mathbf{D}_{\bm{R}} = \partial \mathbf{r}_r /\partial \bm{R}^T$ and $\mathbf{D}_{\bm{\psi}} = \partial \mathbf{r}_r / \partial \bm{\psi}^T$. For ease of expression, denote $\bm{l}_{\mathcal{M}} \triangleq [l_{-U},\ldots,l_{-1},l_0,l_1,\ldots,l_U]^T$ as the OAM state number vector and $\widetilde{\mathbf{R}}_{\mathcal{M}} \triangleq [1/R_{l_{-U}}(z_{\mathrm{ref}}),1/R_{l_{-U+1}}(z_{\mathrm{ref}}),\ldots,1/R_{l_{U}}(z_{\mathrm{ref}})]^T$. Then, the specific forms of $\mathbf{D}_{\bm{\varphi}}$, $\mathbf{D}_{\bm{r}}$, $\mathbf{D}_{\bm{R}}$ and $\mathbf{D}_{\bm{\psi}}$ can be expressed as (\ref{eq:CRLB_3})$\sim$(\ref{eq:CRLB_6}), respectively. After obtaining $\mathbf{F}(\bm{\gamma})$ from (\ref{eq:CRLB_2}), the CRLB of the position parameters $\bm{\gamma}$ to be estimated can be expressed as 
\begin{align}
    \mathrm{E}\left\{(\gamma_n - \widehat{\gamma}_n)^2\right\} \geq \mathrm{CRLB}(\gamma_n) = [\mathbf{F}^{-1}(\bm{\gamma})]_{n,n}. 
\end{align}

\begin{subequations}
\begin{figure*}[!t]            
\par\noindent\small
\begin{align}      \label{eq:CRLB_3}
    \mathbf{D}_{\bm{\varphi}} & = \left[\left( \frac{\partial \mathbf{A}_{Tr}^*}{\partial \bm{\varphi}^T} \odot  \mathbf{B}^* \odot\mathbf{A}_R^* \right) \odot \left(\mathbf{A}_{Tr} \odot \mathbf{B} \odot\mathbf{A}_R \right) +  \left(\mathbf{A}_{Tr}^* \odot \mathbf{B}^* \odot\mathbf{A}_R^* \right) \odot \left( \frac{\partial \mathbf{A}_{Tr}}{\partial \bm{\varphi}^T} \odot  \mathbf{B} \odot\mathbf{A}_R \right) \right]  \mathbf{R}_{\rho}  \notag \\ 
    &  = \left\{\left[\mathrm{diag}(\mathrm{j}\bm{l}_{\mathcal{M}})\mathbf{A}_{Tr}^*  \odot  \mathbf{B}^* \odot\mathbf{A}_R^* \right] \odot \left(\mathbf{A}_{Tr} \odot \mathbf{B} \odot\mathbf{A}_R \right) +  \left(\mathbf{A}_{Tr}^* \odot \mathbf{B}^* \odot\mathbf{A}_R^* \right) \odot \left[\mathrm{diag}(-\mathrm{j}\bm{l}_{\mathcal{M}}) \mathbf{A}_{Tr} \odot  \mathbf{B} \odot\mathbf{A}_R \right] \right\}  \mathbf{R}_{\rho} ,
\end{align}
\begin{align}      \label{eq:CRLB_4}
    \mathbf{D}_{\bm{r}} &= \left[\left( \frac{\partial \mathbf{A}_{Tr}^*}{\partial \bm{r}^T} \odot  \mathbf{B}^* \odot\mathbf{A}_R^* \right) \odot \left(\mathbf{A}_{Tr} \odot \mathbf{B} \odot\mathbf{A}_R \right) +  \left(\mathbf{A}_{Tr}^* \odot \mathbf{B}^* \odot\mathbf{A}_R^* \right) \odot \left( \frac{\partial \mathbf{A}_{Tr}}{\partial \bm{r}^T} \odot  \mathbf{B} \odot\mathbf{A}_R \right) \right]  \mathbf{R}_{\rho}  \notag \\  
    &= \left\{\left[ \mathrm{j}\frac{2\pi}{\lambda_0} (\widetilde{\mathbf{R}}_{\mathcal{M}}\bm{r}^T \diamond \mathbf{A}_{Tr}^*) \odot  \mathbf{B}^* \odot\mathbf{A}_R^* \right] \odot \left(\mathbf{A}_{Tr} \odot \mathbf{B} \odot\mathbf{A}_R \right) +  \left(\mathbf{A}_{Tr}^* \odot \mathbf{B}^* \odot\mathbf{A}_R^* \right) \odot \left[ -\mathrm{j}\frac{2\pi}{\lambda_0} (\widetilde{\mathbf{R}}_{\mathcal{M}}\bm{r}^T \diamond \mathbf{A}_{Tr})  \odot  \mathbf{B} \odot\mathbf{A}_R \right] \right\}  \mathbf{R}_{\rho} ,
\end{align}
\begin{align}      \label{eq:CRLB_5}
     \mathbf{D}_{\bm{R}} &= \left[\left(\mathbf{A}_{Tr}^* \odot \frac{ \partial \mathbf{B}^*}{\partial \bm{R}^T} \odot\mathbf{A}_R^* \right) \odot \left(\mathbf{A}_{Tr} \odot \mathbf{B} \odot\mathbf{A}_R \right) +  \left(\mathbf{A}_{Tr}^* \odot \mathbf{B}^* \odot\mathbf{A}_R^* \right) \odot \left(\mathbf{A}_{Tr} \odot \frac{ \partial \mathbf{B}}{\partial \bm{R}^T} \odot\mathbf{A}_R \right) \right]  \mathbf{R}_{\rho}  \notag \\
    &= \left\{\left[\mathbf{A}_{Tr}^* \odot \left( \mathbf{B}^* \diamond \frac{ \ln \mathbf{B}^*}{ \mathrm{diag}(\bm{R})} \right)\odot\mathbf{A}_R^* \right] \odot \left(\mathbf{A}_{Tr} \odot \mathbf{B} \odot\mathbf{A}_R \right) + \left(\mathbf{A}_{Tr}^* \odot \mathbf{B}^* \odot\mathbf{A}_R^* \right) \odot \left[\mathbf{A}_{Tr} \odot \left( \mathbf{B} \diamond \frac{ \ln \mathbf{B}}{ \mathrm{diag}(\bm{R})} \right)\odot\mathbf{A}_R \right] \right\}  \mathbf{R}_{\rho},
\end{align}
\begin{align}       \label{eq:CRLB_6}
     \mathbf{D}_{\bm{\psi}} &= \left[\left(\mathbf{A}_{Tr}^* \odot \mathbf{B}^* \odot \frac{\partial \mathbf{A}_R^*}{\partial \bm{\psi}^T} \right) \odot \left(\mathbf{A}_{Tr} \odot \mathbf{B} \odot\mathbf{A}_R \right) +  \left(\mathbf{A}_{Tr}^* \odot \mathbf{B}^* \odot\mathbf{A}_R^* \right) \odot \left(\mathbf{A}_{Tr} \odot \mathbf{B} \odot \frac{\partial \mathbf{A}_R}{\partial \bm{\psi}^T} \right) \right]  \mathbf{R}_{\rho}  \notag \\
     &= \left\{ \left[ \mathbf{A}_{Tr}^* \odot \mathbf{B}^* \odot \left( \mathbf{A}_R^* \diamond \frac{\mathrm{ln}\mathbf{A}_R^*}{\mathrm{diag}(\tan \bm{\psi} )}  \right) \right] \odot \left(\mathbf{A}_{Tr} \odot \mathbf{B} \odot\mathbf{A}_R \right) +  \left(\mathbf{A}_{Tr}^* \odot \mathbf{B}^* \odot\mathbf{A}_R^* \right) \odot \left[ \mathbf{A}_{Tr} \odot \mathbf{B} \odot \left( \mathbf{A}_R \diamond \frac{\mathrm{ln}\mathbf{A}_R}{\mathrm{diag}(\tan \bm{\psi} )}  \right) \right] \right\}  \mathbf{R}_{\rho} 
\end{align} \normalsize
\hrulefill
\end{figure*}
\end{subequations}

Secondly, we explore the CRLB of velocity parameters $\{\nu_q,\Omega_q  \}_{q=1}^Q$ in the proposed OAM-based JRC system. Denote $\bm{\nu}=[\nu_1,\ldots,\nu_Q]^T$. From (\ref{eq:JRC_Doppler_5}) and (\ref{eq:JRC_Doppler_10}), the Doppler frequencies depend on the transmit element index $m$ and the OAM state index $l_u$. Considering that $\Omega_q$ and $\nu_q$ satisfy the one-to-one mapping relationship, we mainly focus on the CRLB derivation of $\{\nu_q\}_{q=1}^Q$ here. During the CRLB derivation of position parameters, the velocity terms are not included in (\ref{eq:CRLB_2}) due to the assumption that all position parameters $\{\varphi_q, \psi_q, R_q, r_q \}_{q=1}^Q $ are time-independent while constructing the covariance matrix $\mathbf{R}_r$ in (\ref{eq:CRLB_1}). However, in practice, microchanges exist in azimuthal angles $\{\varphi_q\}_{q=1}^Q$ and ranges $\{R_q, r_q \}_{q=1}^Q$, caused by target velocities. Thus, we utilize the original received baseband signal model in (\ref{eq:SM_4}) for the following analysis. By substituting (\ref{eq:JRC_Doppler_1}), (\ref{eq:JRC_Doppler_3}) and (\ref{eq:JRC_Doppler_4}) into (\ref{eq:SM_4}), we can rewrite $x_{m,n}^{l_u}(t_s) $ as 
\begin{align}
    x_{m,n}^{l_u}(t_s) &= \sum_{q=1}^Q  \acute{\rho}_q(t_s) a_u  e^{\mathrm{j}[\Psi_{LD_q}(m,l_u) + \Psi_{RD_q}(l_u) -2 \pi d \sin(\psi_q)n/{\lambda_m} ] } \notag \\
    & ~~~ + n_{m,n}^{l_u}(t_s) .           
\end{align}
By using the similar steps from (\ref{eq:CRLB_1_1}) to (\ref{eq:CRLB_2}), the FIM with respect to the Doppler parameters $\bm{\nu}=[\nu_1,\ldots,\nu_Q]^T$ can be expressed as
\begin{align}       \label{eq:CRLB_8}
    \mathbf{F}_{\mathrm{D}}(\bm{\nu}) &= \left[\frac{\partial \mathbf{r}_r}{\partial \bm{\nu}^T} \right]^H \mathbf{W} \left[\frac{\partial \mathbf{r}_r}{\partial \bm{\nu}^T} \right]  \notag \\
    &= \left[\frac{\partial \mathbf{r}_r}{\partial \bm{r}^T} \frac{\partial \bm{r}^T }{\partial\bm{\nu}^T}  \right]^H \mathbf{W} \left[\frac{\partial \mathbf{r}_r}{\partial \bm{r}^T} \frac{\partial \bm{r}^T }{\partial\bm{\nu}^T} \right]   
\end{align}
where $\partial \mathbf{r}_r / \partial \bm{r}^T = \mathbf{D}_{\bm{r}} $ is given in (\ref{eq:CRLB_4}) and $\partial \bm{r}^T / \partial\bm{\nu}^T = t \cos \bm{\varphi}^T$ can be derived from (\ref{eq:JRC_Doppler_7}). Once the FIM of Doppler parameters is estimated from (\ref{eq:CRLB_8}), the corresponding CRLB can be calculated by applying the inverse of $\mathbf{F}_{\mathrm{D}}(\bm{\nu})$.


\subsection{BER Analysis}
Denote the total communication symbol vector as $\widetilde{\mathbf{a}}_c = [\mathbf{a}_{c1}^T ~ \mathbf{a}_{c2}^T]^T \in \mathbb{C}^{2(1-\mu)U \times 1}$. Obviously, the number of OAM states used for the communication function is configured as $2(1-\mu)U$, leading to $\log_2 2(1-\mu)U$ bits information which can be transmitted by selecting one state from the $2(1-\mu)U$ OAM states. Each OAM signal is modulated by a classical binary DPSK modulation, i.e., $a_u= e^{\mathrm{j}\phi_u}$ with $\phi_u \in \{0,\pi \}$ which can bring $\log_2 2 =1$ bit information. Totally, $\log_2 2(1-\mu)U +1$ bits information can be transmitted by an OAM signal in our proposed JRC system. 
To demodulate the communication signals correctly, the OAM states of the signals need to be estimated correctly, then the corresponding radiated symbol should be estimated correctly. Assume that correct demodulation events about the OAM state and radiated symbol are denoted as $\Delta_{\mathrm{OAM}}$ and $\Delta_{\mathrm{DPSK}}$, respectively. If there exist any errors in OAM or radiated symbol demodulation events, the total communication demodulation process is regarded as a failure. Consequently, the probability of the total correct demodulation event $\Delta$ is expressed by
\begin{align}                   \label{eq:BER_1}
    P(\Delta)= P(\Delta_{\mathrm{DPSK}} \vert \Delta_{\mathrm{OAM}}) P(\Delta_{\mathrm{OAM}})
\end{align}
where $P(\Delta_{\mathrm{OAM}})$ is the correct estimation probability of OAM states. $P(\Delta_{\mathrm{DPSK}} \vert \Delta_{\mathrm{OAM}})$ is the conditional correct estimation probability of the transmitted signal waveforms while the OAM states have already been demodulated correctly.

For the calculation of $P(\Delta_{\mathrm{OAM}})$, the pairwise error probability (PEP) $P(l_u \rightarrow l_{\tilde{u}} )$, which represents the error probability where the transmitting OAM state is $l_u$ while the receiving OAM state is detected as $l_{\tilde{u}}$ by mistake, is utilized here. Specifically, $P(l_u \rightarrow l_{\tilde{u}} )$ has the form as \cite{EBasar2018}
\begin{align}
    P(l_u \rightarrow l_{\tilde{u}} ) = \mathbb{Q} \left( \sqrt{\frac{(1-\mu)U \sigma^2_{s,l_u}(1-C_{l_u,l_{\tilde{u}}})}{\sigma_n^2}}  \right)
\end{align}
where $\mathbb{Q}(x)$ is the complementary cumulative distribution function defined as \cite{ABehnad2020}
\begin{align}
    \mathbb{Q}(x) = \frac{1}{\sqrt{2\pi}} \int_x^{\infty} \mathrm{exp}\left(-\frac{u^2}{2}\right) \mathrm{d}u
\end{align}
and $\sigma^2_{s,l_u}$ is the attenuated power at the receiving end which propagates from the transmitting signal with OAM state $l_u$. $C_{l_u,l_{\tilde{u}}} = \cos[(l_u-l_{\tilde{u}})\varphi]$ denotes the correlation coefficient between the two OAM states. Then, $P(\Delta_{\mathrm{OAM}})$ can be obtained as \cite{XXiong2022}
\begin{align}       \label{eq:BER_5}
    P(\Delta_{\mathrm{OAM}}) = 1- \sum_{l_u \neq l_{\tilde{u}}} P(l_u \rightarrow l_{\tilde{u}} ). 
\end{align}

For the calculation of $P(\Delta_{\mathrm{DPSK}} \vert \Delta_{\mathrm{OAM}})$, we assume that the wireless channel is non-selective and has predominantly Rayleigh fading characteristics. Define the signal-to-noise ratio per bit as
\begin{align}
    \varUpsilon_b  = \frac{E_b}{N_0} = \frac{E_s}{N_0 \log_2 (M_I)}
\end{align}
where $E_b$ and $E_s$ are the energy per bit and symbol energy, respectively. $N_0$ is the noise power spectral density. $M_I$ is the number of waveforms which is called modulation index, i.e., for binary DPSK, $M_I=2$. Then, the probability density function of $\varUpsilon_b$ can be obtained to be
\begin{align}
    P(\varUpsilon_b) = \frac{1}{\overline{\varUpsilon}_b} \mathrm{exp}\left( -\frac{\varUpsilon_b}{\overline{\varUpsilon}_b} \right)
\end{align}
where $\overline{\varUpsilon}_b= \mathrm{E}[\varUpsilon_b]$. By invoking that the BER in the presence of additive white Gaussian noise‌ (AWGN) \cite{MTCore2002} is   
\begin{align}
    P_{\mathrm{DPSK}}(\varUpsilon_b) = \frac{1}{2}\mathrm{exp}(-\varUpsilon_b), 
\end{align}
the average BER in a Rayleigh fading channel can be derived as
\begin{align}
    P_b = \int_0^{\infty} P_{\mathrm{DPSK}}(\varUpsilon_b) P(\varUpsilon_b) \mathrm{d} \varUpsilon_b  = \frac{0.5}{1+ \overline{\varUpsilon}_b},
\end{align}
leading to the following expression of $P(\Delta_{\mathrm{DPSK}} \vert \Delta_{\mathrm{OAM}})$ as
\begin{align}               \label{eq:BER_9}
    P(\Delta_{\mathrm{DPSK}} \vert \Delta_{\mathrm{OAM}})= 1- P_b.
\end{align}
By combining (\ref{eq:BER_5}), (\ref{eq:BER_9}) with (\ref{eq:BER_1}), the error probability of our proposed OAM-JRC system can be derived as (\ref{eq:BER_10}).
\begin{figure*}[htp]            
\par\noindent
\begin{align}                \label{eq:BER_10}
    P_e &= 1-  P(\Delta) = 1- P(\Delta_{\mathrm{DPSK}} \vert \Delta_{\mathrm{OAM}}) P(\Delta_{\mathrm{OAM}})   \notag \\
      &= 1- \left\{\left[  1- \sum_{l_u \neq l_{\tilde{u}}} \mathbb{Q} \left( \sqrt{\frac{(1-\mu)U \sigma^2_{s,l_u}(1-C_{l_u,l_{\tilde{u}}})}{\sigma_n^2}}  \right) \right] \left( 
      1-  \frac{0.5}{1+ \overline{\varUpsilon}_b} \right)        \right\}
\end{align}
\hrulefill
\end{figure*}

\section{Simulation Experiments}        \label{sec:SE}
In this section, various simulation experiments are carried out to validate the theoretical derivation and demonstrate the feasibility of the proposed methods. Unless otherwise noted, we set the number of transmit and receive elements as $M=N=6$. The carrier frequency is $f_0=79\mathrm{GHz}$ with the unit frequency increment fixed to $\Delta f = 30\mathrm{kHz}$. We choose $2U+1=21$ OAM states with $\{-10,-9,\ldots,10 \}$ and the beam waist radius corresponding to OAM state $l_u=1$ is set as $w_1=2\lambda_0$ where $\lambda_0=c/f_0$. The DPSK symbol duration is set as $T_{\mathrm{sym}}=5 \mathrm{\mu s}$. Considering that the widths of standard lanes and cars are 3.5m and 1.6m, respectively, the propagation distance can be obtained as  $z_{\mathrm{ref}}=3.5\mathrm{m} - 1.6\mathrm{m}= 1.9\mathrm{m}$. Assume that there are $Q=3$ point scatterers with radar position parameters $\{\varphi, r, R, \psi\}$ as $\{5^{\circ}, 0.21\mathrm{m}, 10\mathrm{m}, 45^{\circ}\}$, $\{30^{\circ}, 0.215\mathrm{m}, 15 \mathrm{m}, 60^{\circ}\}$, and $\{60^{\circ}, 0.22\mathrm{m}, 7\mathrm{m}, 30^{\circ}\}$, respectively and the corresponding velocity parameters as $\bm{\nu}=[10\mathrm{m/s} ~ 20\mathrm{m/s}~ 5\mathrm{m/s} ]$.

\begin{figure}[H]
\centerline{\includegraphics[scale=0.4]{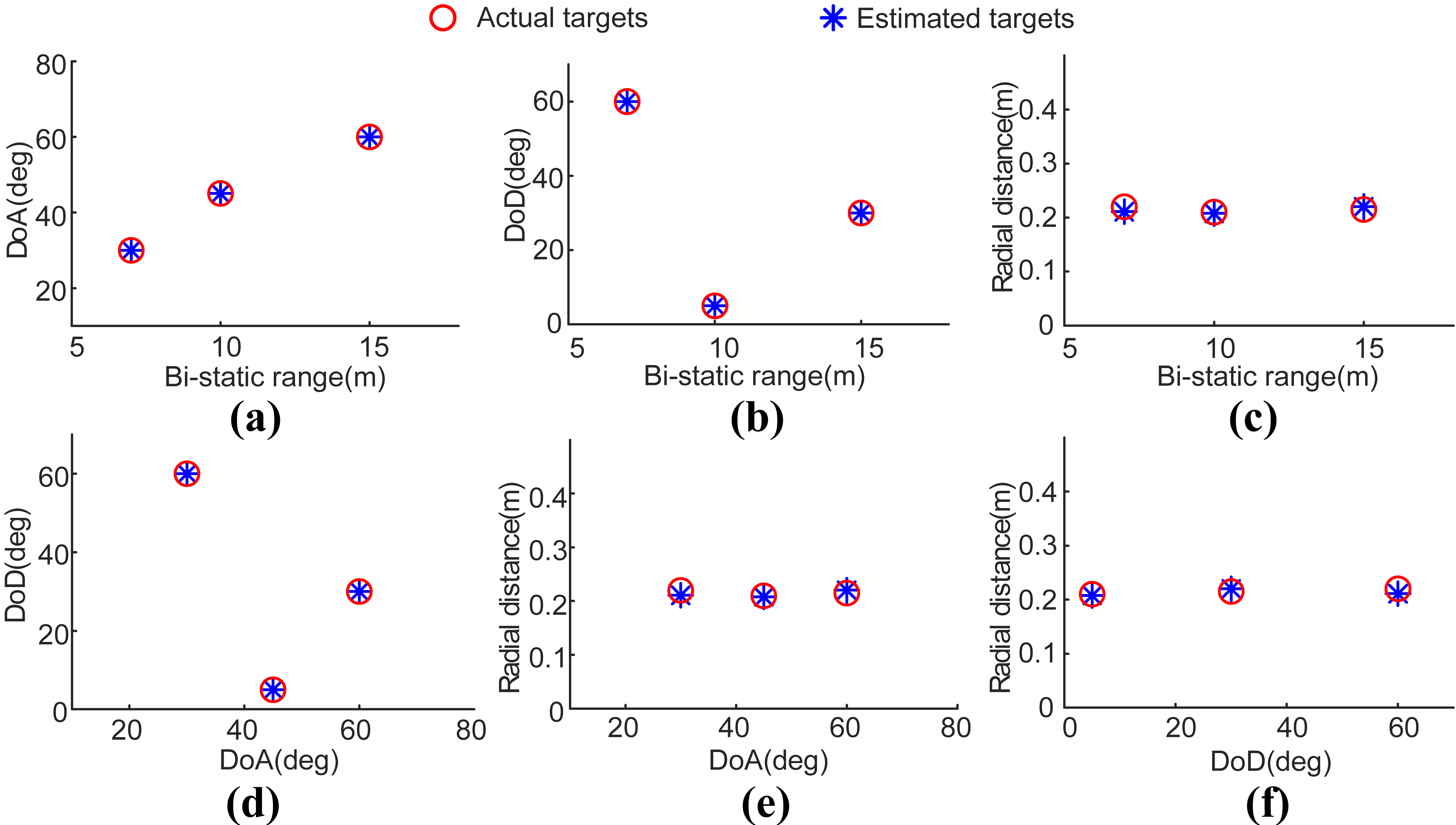}}
\caption{Target detection results of the proposed methods with $Q=3$, $\mu=0.5$ and SNR=10 dB in (a) DoA-bistatic range, (b) DoD-bistatic range, (c) Radial distance-bistatic range, (d) DoD-DoA, (e) Radial distance-DoA, and (f) Radial distance-DoD planes. The red circles indicate actual targets and the blue crosses indicate the estimated targets.}
\label{Radar_3targets}
\end{figure}

\textbf{Radar performance:} We first examine the ability of our proposed OAM-based JRC method to recover the target position parameters in the angle-range domain. By setting the multiplexing parameter $\mu$ as 0.5 and SNR as 10 dB, it can be seen from Fig. \ref{Radar_3targets} that all the target parameters can be recovered perfectly. Furthermore, we benchmark the performance of various multiplexing parameters using the root-mean-squared error (RMSE), defined as
\begin{align}
    \mathrm{RMSE}=\frac{1}{J}\sum_{j=1}^J\sqrt{\frac{1}{Q}\sum_{q=1}^Q\left(\hat{\eta}_{q,j}-\eta_q\right)^2}
\end{align}
where $J$ (set to 500) is the number of Monte Carlo experiments, $\hat{\eta}_{q,j}$ is the estimated parameter (DoD, DoA, bi-static range, or radial distance) of the $q$-th target scatterer in the $j$-th Monte Carlo experiment and $\eta_q$ is the corresponding true value. Fig. \ref{RMSEvsSNR}a, b, c, and d display the RMSE of bi-static range, DoA, DoD, and radial distance, respectivley, versus the received SNR that was varied from -10 to 20 dB in steps of 3 dB for various $\mu$ values. For comparison, we also plot the corresponding root of CRLB (RCRLB) values. It can be seen from Fig. \ref{RMSEvsSNR} that the RMSEs of all radar parameters decrease and approach the RCRLBs gradually with the SNR increasing. Moreover, we find that larger $\mu$ leads to better parameter estimation performance. 

\begin{figure*}
\centerline{\includegraphics[width=1\textwidth]{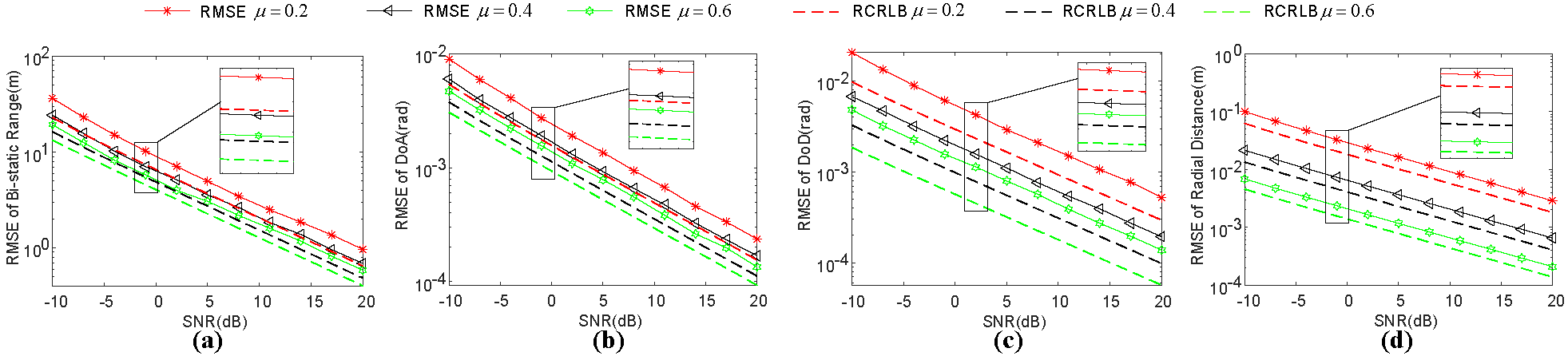}}
\caption{RMSE of (a) bi-static range, (b) DoA, (c) DoD, and (d) radial distance versus SNR for various $\mu$ values together with the corresponding RCRLB values.}
\label{RMSEvsSNR}
\end{figure*}

Similarly, we also validate the effectiveness of the proposed OAM-based target velocity estimation method. By setting a short time window, we can obtain the Doppler frequency spectrum of OAM-based radar in Fig. \ref{Doppler_curve}. Once the Doppler frequencies are estimated, the target velocity parameters can be recovered based on (\ref{eq:JRC_Doppler_10}). Fig. \ref{RMSE_nu_vs_SNR} displays the RMSE of target velocity parameters versus SNR for various $\mu$ and the corresponding RCRLB values where the same conclusion can be reached as that in Fig. \ref{RMSEvsSNR}.

\begin{figure}[t]
\centerline{\includegraphics[scale=0.43]{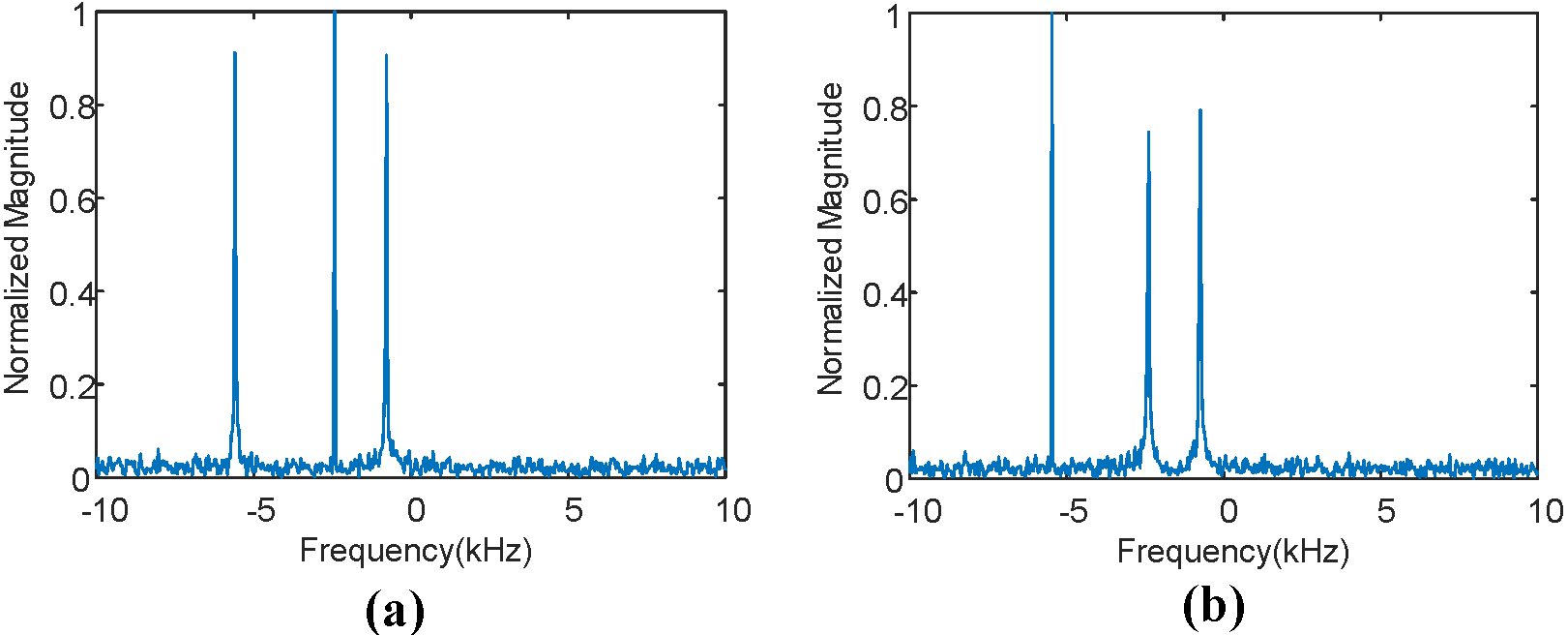}}
\caption{Doppler frequency spectrum with SNR=10 dB at (a) $l_u=0$ and (b) $l_u=-5$.}
\label{Doppler_curve}
\end{figure}
\begin{figure}[t]
\centerline{\includegraphics[scale=0.51]{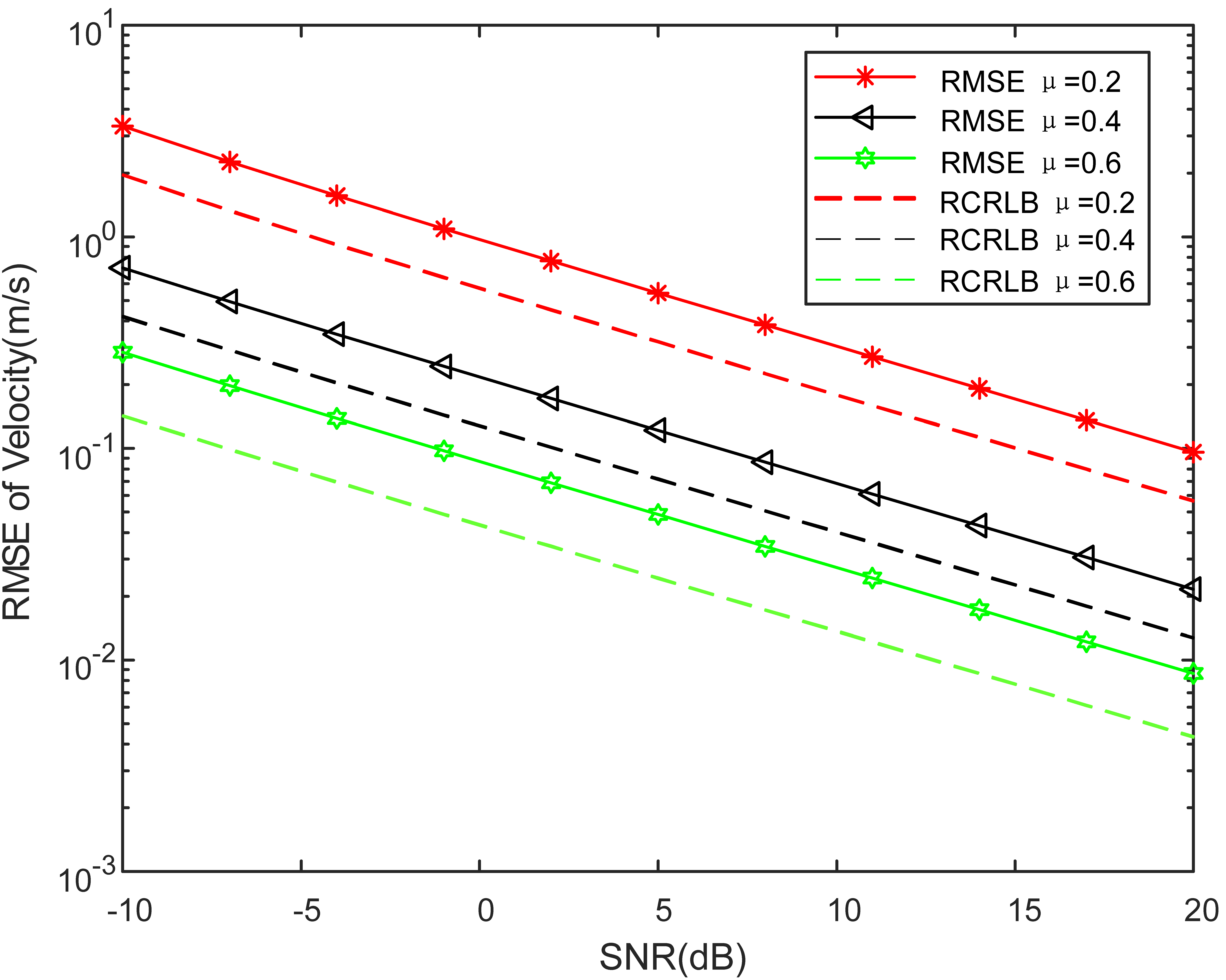}}
\caption{RMSE of velocity for various $\mu$ values together with the corresponding RCRLB values.}
\label{RMSE_nu_vs_SNR}
\end{figure}


\textbf{JRC performance:} Based on the result of communication symbol recovery in (\ref{eq:JRC_B5}), we know that the recovered communication symbols depend on the estimated target position parameters. Thus, communication performance cannot be analyzed separately without considering radar performance here. In this experiment, we take the radar performance and communication performance into account simultaneously, where the communication effectiveness of our proposed OAM-JRC system is assessed through standard metrics, such as BER and communication throughput. The BER metric of our proposed OAM-JRC scheme has been presented in the former section. For the communication throughput metric, it refers to the actual rate of successful data delivery within a given time period \cite{XLiu2023}, defined as
\begin{align}
    S_{\mathrm{succ}}=\frac{N_{\mathrm{succ}}}{T}
\end{align}
where $N_{\mathrm{succ}}$ is the transferred data and $T$ is the transmission time. 

\begin{figure}[t]
\centerline{\includegraphics[scale=0.50]{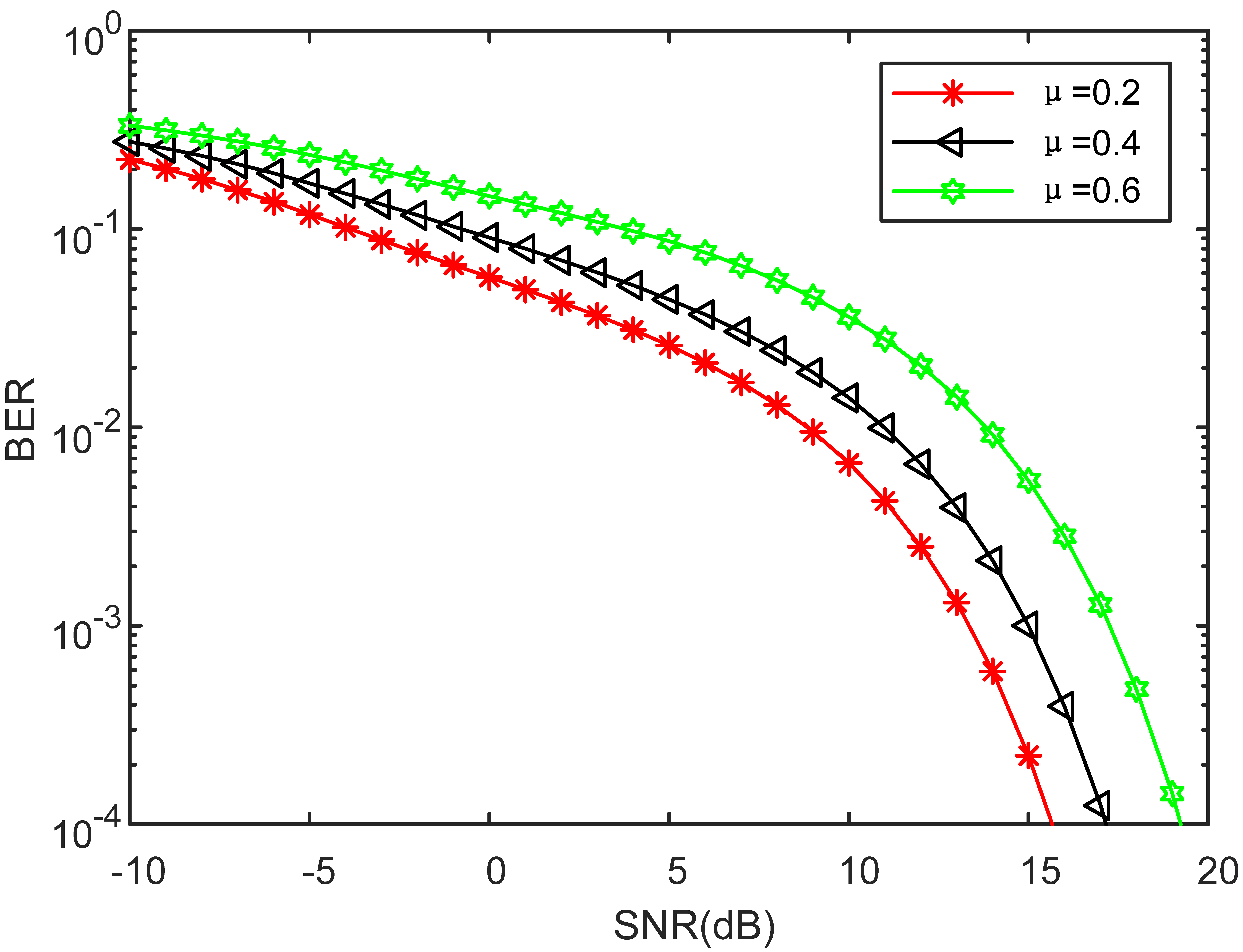}}
\caption{BER of our proposed OAM-JRC system with respect to SNR for different sharing factors.}
\label{BERvsSNR}
\end{figure}

We first analyze the BER performance of the OAM-JRC system with respect to SNR for different sharing factors. As shown in Fig. \ref{BERvsSNR}, the BER of the OAM-JRC system improves with the SNR increasing. When the SNR is fixed, the BER of the OAM system decreases with the decrease of sharing factor value. In order to better display the relationship between radar performance and communication performance, the radar RMSE, BER, and communication throughput with respect to the sharing factor are shown in Figs. \ref{RMSEvsMu} and \ref{BER_ThroughputvsMu}, respectively, where we fix the SNR at 5 dB. As we can see from the figures, the simulation results show a tradeoff between radar performance and communication performance, where one factor is enhanced at the cost of the other parameter. In the region of low sharing factor $\mu$, low BER and high throughput performance can be obtained while the RMSE performance deteriorates. In contrast, in the region of high sharing factor $\mu$, low RMSE performance can be obtained while the BER and throughput performance deteriorate.

\begin{figure}[t]
\centerline{\includegraphics[scale=0.53]{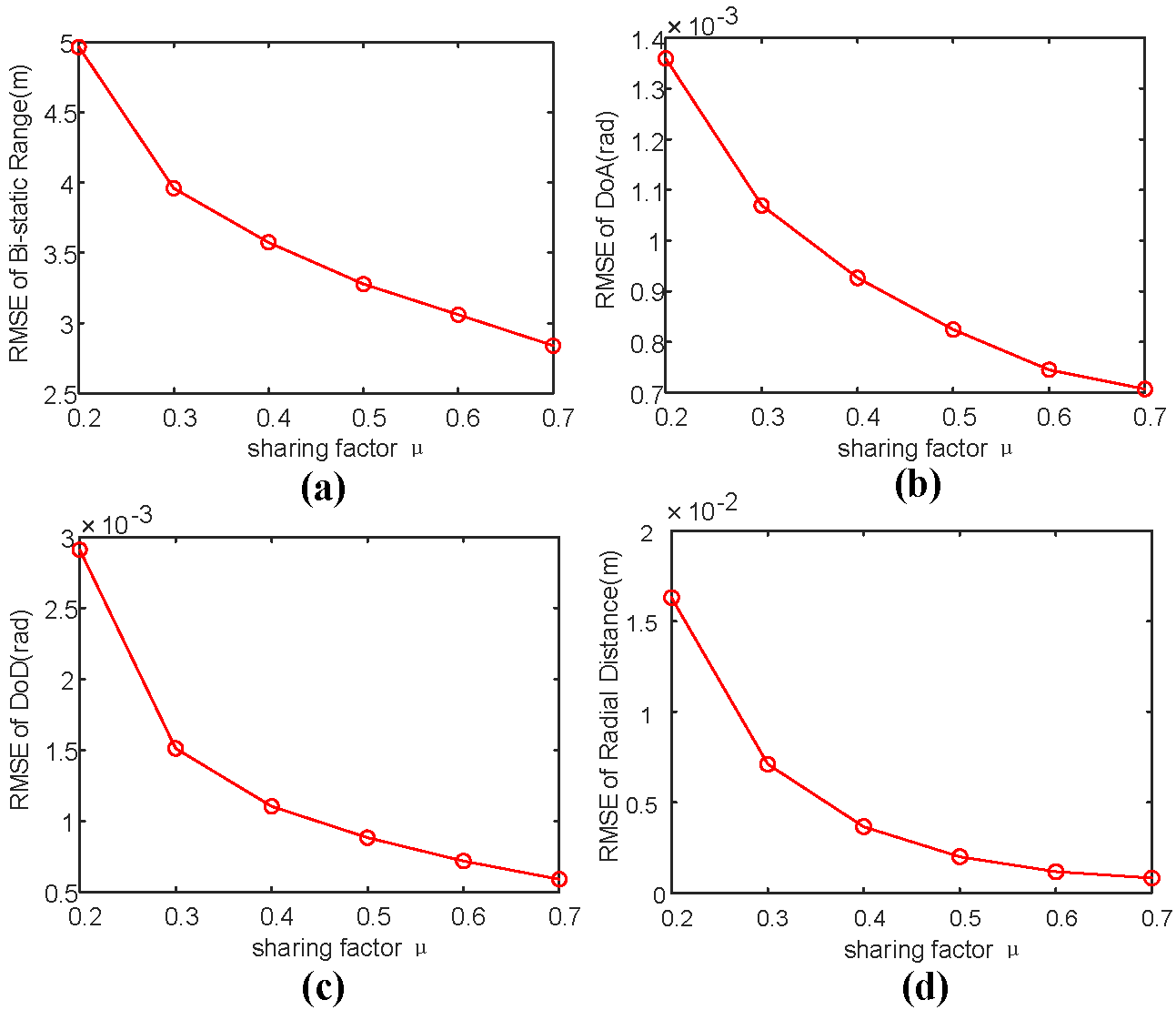}}
\caption{RMSEs of our proposed OAM-JRC system with respect to different sharing factors.}
\label{RMSEvsMu}
\end{figure}
\begin{figure}[t]
\centerline{\includegraphics[scale=0.43]{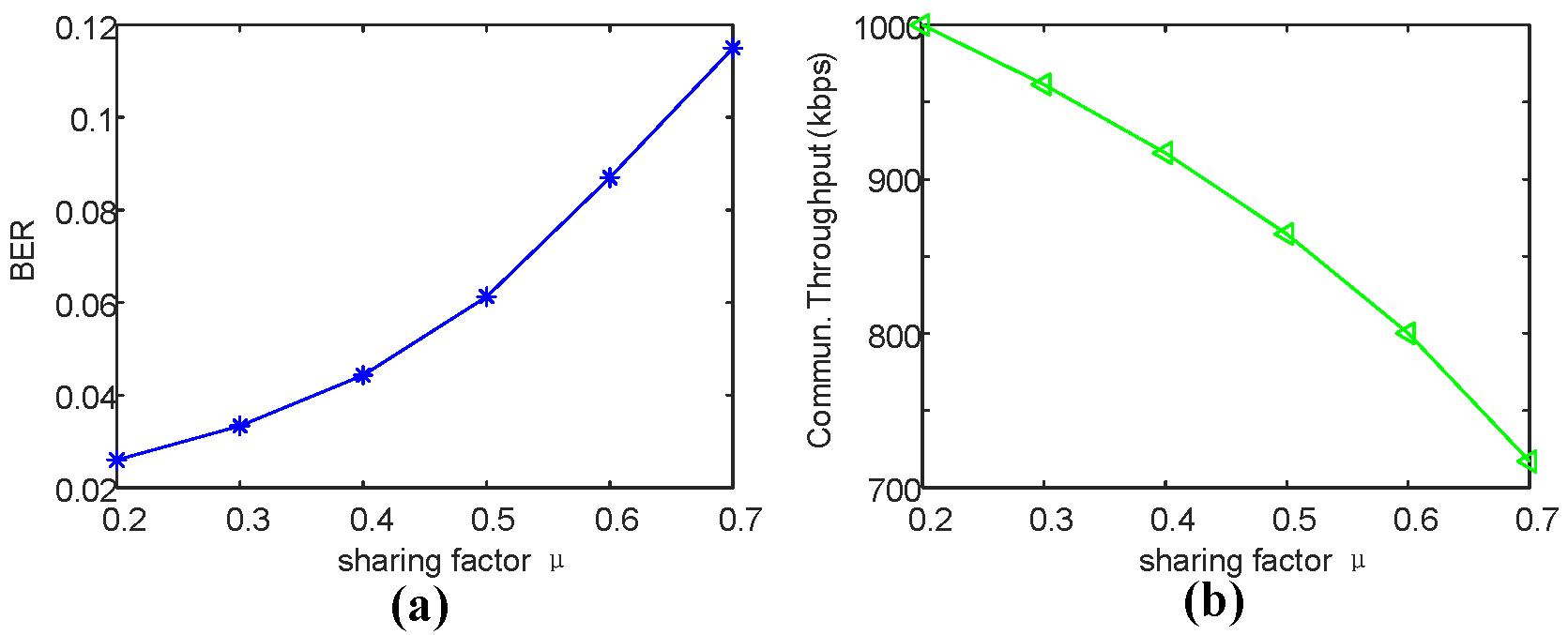}}
\caption{BER and communication throughput of our proposed OAM-JRC system with respect to different sharing factors.}
\label{BER_ThroughputvsMu}
\end{figure}

\section{Summary}       \label{sec:summary}
In this paper, the application of OAM technique in a bi-static system has been investigated which aims to overcome the challenges of autonomous vehicles. In contrast to conventional UCA-based OAM systems employing isotropic antennas, we adopted a spatial modulation scheme utilizing a ULA of traveling-wave antennas, enabling the simultaneous generation of multiple LG vortex beams. In terms of target detection, a radar-only estimation algorithm was developed for target positioning and velocity detection. To ensure the identifiability and reconstruction of JRC parameters, an OAM-based MDM strategy was introduced that operates across radar and JRC frames. In addition, we provided a comprehensive performance analysis of the proposed OAM-based JRC system, including recovery guarantees, CRLB derivations for radar parameters estimation, and BER, throughput evaluations for communication reliability. Both analytical and numerical results validate the effectiveness of the proposed approach and our proposed OAM multiplexing technique allows for a trade-off between radar and communication performance.

\bibliographystyle{IEEEtran}
\bibliography{main}

\end{document}